\newcommand{\blind}{1}
\newcommand{\indep}{\;\, \rule[0em]{.03em}{.67em} \hspace{-.25em}\rule[0em]{.65em}{.03em} \hspace{-.25em}\rule[0em]{.03em}{.67em}\;\,}
\newcommand{\Appendix}{\def\thesection{Appendix~\Alph{section}}\def\thesubsection{A.\arabic{subsection}}}
\newcommand{\ie}{\mathbbm{1}(E=e)}
\newcommand{\cp}{\stackrel{\mathcal{P}}{\rightarrow}}
\newcommand{\cl}{\stackrel{\mathcal{D}}{\rightarrow}}
\newcommand{\condind}{\perp\hspace{-1em}\perp}
\newcommand{\mystrut}{\vphantom{\int_0^1}}
\newcommand{\p}{\stackrel{p}{\rightarrow}}
\newcommand{\sumi}{\ensuremath{\sum_{i=1}^{n}}}
\newcommand{\sumj}{\ensuremath{\sum_{j=1}^{n}}}
\newcommand{\sumk}{\ensuremath{\sum_{k=1}^{n}}}
\newcommand{\ifB}{ \frac{\mathbbm{1}_e(\Tilde{e})}{\pi_e(\Tilde{x})} \{ \Tilde{y}- M_e( \Tilde{x}) \} + M_e( \Tilde{x}) - \mathbb{E}(M_e( X)) }
\newcommand{\ifAa}{ \mathbb{E}\left[ \mathbb{E}_{Y,I|X}\left[ \left. Y \cdot \left( \frac{\mathbbm{1}_e(\Tilde{e})}{\pi_e(\Tilde{x})} \{ \mathbbm{1}\{ \Tilde{j} \leq I \}- N_e(I, \Tilde{x}) \} + N_e(I, \Tilde{x}) - \mathbb{E}_X(N_e(I, X)) \right) \right| X, E=e \right] \right]    }
\newcommand{\ifAb}{ \frac{\mathbbm{1}_e(\Tilde{e})}{\pi_e(\Tilde{x})} \left\{ \Tilde{y} F(\Tilde{i})- \mathbb{E}(YF(I)|X= \Tilde{x},E=e) \right\} + \mathbb{E}(YF(I)|X= \Tilde{x},E=e) - \mathbb{E}(\mathbb{E}(YF(I)|X,E=e)) }
\newcommand{\Bhat}{\mathbb{E}(M_e( X)) }
\newcommand{\Ahat}{\mathbb{E}(\mathbb{E}(YF(I)|X,E=e)) }
\newcommand{\ifBi}{ \frac{\mathbbm{1}_e(\Tilde{e})}{\pi_e(\Tilde{x})} \{ \Tilde{y}- M_e( \Tilde{x}) \} + M_e( \Tilde{x}) - \mathbb{E}(M_e( X)) }
\newcommand{\ifAai}{ \mathbb{E}\left[ M_e(X) \cdot \mathbb{E}_{Y,I|X}\left[ \left. \left( \frac{\mathbbm{1}_e(\Tilde{e})}{\pi_e(\Tilde{x})} \{ \mathbbm{1}\{ \Tilde{j} \leq I \}- N_e(I, \Tilde{x}) \} + N_e(I, \Tilde{x})  \right) \right| X, E=e \right] \right]    }
\newcommand{\ifAci}{ - 2 \mathbb{E}(M_e(X)\mathbb{E}(F(I)|X,E=e)) }
\newcommand{\ifAbi}{ \frac{\mathbbm{1}_e(\Tilde{e})}{\pi_e(\Tilde{x})} \left\{ \Tilde{y} F(\Tilde{i})- M_e(\Tilde{x}) \mathbb{E}(F(I)|X= \Tilde{x},E=e) \right\} + M_e(\Tilde{x})\mathbb{E}(F(I)|X= \Tilde{x},E=e)  }
\newcommand{\Bhatn}{\mathbb{E}_n(M_e( X)) }
\newcommand{\Ahatn}{\mathbb{E}_n(\mathbb{E}_{\mathcal P_n}(YF(I)|X,E=e)) }
\newcommand{\ifBn}{ \frac{\mathbbm{1}_e(\Tilde{e})}{\pi_e(\Tilde{x})} \{ \Tilde{y}- M_e( \Tilde{x}) \} + M_e( \Tilde{x}) - \mathbb{E}(M_e( X)) }
\newcommand{\ifAan}{ \mathbb{E}_{\hat{\mathcal P}_n}\left[ \hat{M}_e(X) \cdot \mathbb{E}_{\hat{\mathcal P}_n}\left[ \left. \left( \frac{\mathbbm{1}_e(\Tilde{e})}{\pi_e(\Tilde{x})} \{ \mathbbm{1}\{ \Tilde{j} \leq I \}- \hat{N}_e(I, \Tilde{x}) \} + \hat{N}_e(I, \Tilde{x})  \right) \right| X, E=e \right] \right]    }
\newcommand{\ifAcn}{ - 2 \mathbb{E}_n(M_e(X)\mathbb{E}_{\mathcal P_n}(F(I)|X,E=e)) }
\newcommand{\ifAbn}{ \frac{\mathbbm{1}_e(\Tilde{e})}{\pi_e(\Tilde{x})} \left\{ \Tilde{y} \hat{F}(\Tilde{i})- \hat{M}_e(\Tilde{x}) \mathbb{E}_{\hat{\mathcal P}_n}(\hat{F}(I)|X= \Tilde{x},E=e) \right\} + \hat{M}_e(\Tilde{x})\mathbb{E}_{\hat{\mathcal P}_n}(\hat{F}(I)|X= \Tilde{x},E=e)  }
\renewcommand{\d}{\stackrel{d}{\rightarrow}}
\renewcommand{\thefootnote}{\fnsymbol{footnote}}
\newtheorem{Th}{\underline{\bf Theorem}}
\newtheorem{Rem}{\underline{\bf Remark}}
\newtheorem{Pro}{Proposition}
\newtheorem{Assump}{{Assumption}}
\def\boxit#1{\vbox{\hrule\hbox{\vrule\kern6pt  \vbox{\kern6pt#1\kern6pt}\kern6pt\vrule}\hrule}}
\def\bse{\begin{eqnarray*}}
\def\ese{\end{eqnarray*}}
\def\be{\begin{eqnarray}}
\def\ee{\end{eqnarray}}
\def\bsq{\begin{equation*}}
\def\esq{\end{equation*}}
\def\bq{\begin{equation}}
\def\eq{\end{equation}}
\def\th{^{th}}
\def\var{\hbox{var}}
\def\cov{\hbox{cov}}
\def\corr{\hbox{corr}}
\def\trace{\hbox{trace}}
\def\wh{\widehat}
\def\wt{\widetilde}
\def\eff {_{\rm eff}}
\def\sub{{\rm sub}}
\def\cat{{\rm cat}}
\def\th{^{\rm th}}
\def\my{\mathcal Y}
\def\eLL{\mathcal L}
\def\mR{\mathbb{R}}
\def\n{\nonumber}
\def\bias{\mbox{bias}}
\def\vecl{\mbox{vecl}}
\def\AIC{\mbox{AIC}}
\def\BIC{\mbox{BIC}}
\def\MSE{\mbox{MSE}}
\def\rank{\mbox{rank}}
\def\cov{\mbox{cov}}
\def\corr{\mbox{corr}}
\def\vec{\mbox{vec}}
\def\argmin{\mbox{argmin}}
\def\argmax{\mbox{argmax}}
\def\diag{\mbox{diag}}
\def\tr{\mbox{trace}}
\def\sumi{\sum_{i=1}^n}
\def\sumj{\sum_{j=1}^n}
\def\trans{^{\rm T}}
\def\hDash{\bot\!\!\!\bot}
\def\calS{\mbox{ $\mathcal{S}$}}
\def\calT{\mbox{ $\mathcal{T}$}}
\def\ba{{\boldsymbol\alpha}}
\def\ha{\widehat{\ba}}
\def\bb{{\boldsymbol\beta}}
\def\bd{{\boldsymbol\delta}}
\def\beps{\boldsymbol\epsilon}
\def\bg{{\boldsymbol\gamma}}
\def\bphi{{\boldsymbol\phi}}
\def\bzeta{{\boldsymbol\zeta}}
\def\bmu{\boldsymbol\mu}
\def\hb{\widehat{\bb}}
\def\he{\widehat{\varepsilon}}
\def\cs{\calS_{Y\mid\x}}
\def\cms{\calS_{\EE(Y\mid\x)}}
\def\defby{\stackrel{\mbox{\textrm{\tiny def}}}{=}}
\def\as{\mbox{ a.s.}}
\def\A{{\bf A}}
\def\a{{\bf a}}
\def\B{{\bf B}}
\def\c{{\bf c}}
\def\C{{\bf C}}
\def\D{{\bf D}}
\def\E{{\mathbb{E}}}
\def\V{{\bf V}}
\def\g{{\bf g}}
\def\f{{\bf f}}
\def\h{{\bf h}}
\def\b{{\bf b}}
\def\I{{\bf I}}
\def\M{{\bf M}}
\def\N{\mbox{ $\mathcal{N}$}}
\def\K{{\bf K}}
\def\t{{\bf t}}
\def\T{{\bf T}}
\def\bP{{\bf P}}
\def\bQ{{\bf Q}}
\def\bV{{\bf V}}
\def\hQ{{\widehat \bQ}}
\def\bS{{\bf S}}
\def\bs{{\bf s}}
\def\U{{\bf U}}
\def\u{{\bf u}}
\def\v{{\bf v}}
\def\W{{\bf W}}
\def\w{{\bf w}}
\def\X{{\bf X}}
\def\H{{\bf H}}
\def\q{{\bf q}}
\def\O{{\bf O}}
\def\x{{\bf x}}
\def\tx{{\widetilde \x}}
\def\Y{{\bf Y}}
\def\tY{{\widetilde Y}}
\def\y{{\bf y}}
\def\Z{{\bf Z}}
\def\z{{\bf z}}
\def\Ybar{{\overline{Y}}}
\def\xbar{{\overline{\x}}}
\def\bSig{{\bf \Sigma}}
\def\bLam{{\bf \Lambda}}
\def\blam{{\boldsymbol\lambda}}
\def\diag{\hbox{diag}}
\def\dfrac#1#2{{\displaystyle{#1\over#2}}}
\def\VS{{\vskip 3mm\noindent}}
\def\refhg{\hangindent=20pt\hangafter=1}
\def\refmark{\par\vskip 2mm\noindent\refhg}
\def\naive{\hbox{naive}}
\def\itemitem{\par\indent \hangindent2\parindent \textindent}
\def\dist{\hbox{dist}}
\def\trace{\hbox{trace}}
\def\refhg{\hangindent=20pt\hangafter=1}
\def\refmark{\par\vskip 2mm\noindent\refhg}
\def\Normal{\hbox{Normal}}
\def\Uniform{\hbox{Uniform}}
\def\povr{\buildrel p\over\longrightarrow}
\def\ccdot{{\bullet}}
\def\diag{\hbox{diag}}
\def\log{\hbox{log}}
\def\bias{\hbox{bias}}
\def\Siuu{\boldSigma_{i,uu}}
\def\squarebox#1{\hbox to #1{\hfill\vbox to #1{\vfill}}}
\def\btheta{{\boldsymbol \theta}}
\def\balpha{{\boldsymbol \alpha}}
\def\bpi{{\boldsymbol \pi}}
\def\bx{{\bf x}}
\def\0{{\bf 0}}
\def\1{{\bf 1}}
\def\vec{\mathrm{vec}}
\def\mA{\mathcal{A}}
\def\mB{\mathcal{B}}
\def\mC{\mathcal{C}}
\def\mH{\mathcal{H}}
\def\my{\mathcal Y}
\def\mp{\mathcal P}
\def\EE{\mathbb{E}}
\def\dfrac#1#2{{\displaystyle{#1\over#2}}}
\def\VS{{\vskip 3mm\noindent}}
\def\refhg{\hangindent=20pt\hangafter=1}
\def\refmark{\par\vskip 2mm\noindent\refhg}
\def\itemitem{\par\indent \hangindent2\parindent \textindent}
\def\var{\hbox{var}}
\def\cov{\hbox{cov}}
\def\corr{\hbox{corr}}
\def\trace{\hbox{trace}}
\def\refhg{\hangindent=20pt\hangafter=1}
\def\Normal{\hbox{Normal}}
\def\povr{\buildrel p\over\longrightarrow}
\def\ccdot{{\bullet}}
\def\pr{\hbox{pr}}
\def\wh{\widehat}
\def\wt{\widetilde}
\def\diag{\hbox{diag}}
\def\log{\hbox{log}}
\def\bias{\hbox{bias}}
\def\whT{\widehat{\Theta}}
\def\diag{\hbox{diag}}
\def\logit{{\mbox{logit}}}
\def\macomment#1{\vskip 2mm\boxit{\vskip 2mm{\color{red}\bf#1} {\color{blue}\bf -- MA\vskip 2mm}}\vskip 2mm}
\def\delunacomment#1{\vskip 2mm\boxit{\vskip 2mm{\color{blue}\bf#1} {\color{blue}\bf -- XdL\vskip 2mm}}\vskip 2mm}
\def\leecomment#1{\vskip 2mm\boxit{\vskip 2mm{\color{purple}\bf#1} {\color{blue}\bf -- Lee\vskip 2mm}}\vskip 2mm}
\def\ph#1#2#3{\varphi{_#1}(#2,#3)}
\def\ps#1#2{\Psi_{#1}(#2)}
\def\pn{\hat{\mathcal P}_n}
\def\e#1#2{\mathbb{E}_{#1} \left[ #2 \right]}
\begin{document}

\def\spacingset#1{\renewcommand{\baselinestretch}%
{#1}\small\normalsize} \spacingset{1}

%%%%%%%%%%%%%%%%%%%%%%%%%%%%%%%%%%%%%%%%%%%%%%%%%%%%%%%%%%%%%%%%%%%%%%%%%%%%%%
\date{}
\if1\blind
{
  \title{\bf Causal inference targeting a concentration index for studies of health inequalities}
  \author{Mohammad Ghasempour\thanks{
    We are grateful for comments given by Tetiana Gorbach which have improved the paper. The Marianne and Marcus Wallenberg Foundation is acknowledged for their financial support.}\hspace{.2cm}\\
    Department of Statistics/USBE, Ume\r{a} University, Sweden \\
    and \\
    Xavier de Luna \\
    Department of Statistics/USBE, Ume\r{a} University, Sweden 
    \\
    and \\
    Per E. Gustafsson \\
    Department of Epidemiology and Global Health, Ume\r{a} University, Sweden}
  \maketitle
} \fi

\if0\blind
{
  \bigskip
  \bigskip
  \bigskip
  \begin{center}
    {\LARGE\bf Causal inference targeting a concentration index for studies of health inequalities}
\end{center}
  \medskip
} \fi

\bigskip
\begin{abstract}
A concentration index, a standardized covariance between a health variable and relative income ranks, is often used to quantify income-related health inequalities. There is a lack of formal approach to study the effect of an exposure, e.g., education, on such measures of inequality. In this paper we contribute by filling this gap and developing the necessary theory and method. Thus, we define a counterfactual concentration index for different levels of an exposure. We give conditions for the identification of this complex estimand, and then deduce its efficient influence function. This allows us to propose estimators, which are regular asymptotic linear under certain conditions. In particular, we show that these estimators are $\sqrt n$-consistent and asymptotically normal, as well as locally efficient. The implementation of the estimators is based on the fit of several nuisance functions. The estimators proposed have rate robustness properties allowing for convergence rates slower than $\sqrt{n}$-rate for some of the nuisance function fits.
The relevance of the asymptotic results for finite samples is studied with simulation experiments. We also present a case study of the effect of education on income-related health inequalities for a Swedish cohort.
\end{abstract}

\noindent%
{\it Keywords:}  Efficient influence function; Gini index; rate robustness; semiparametric efficiency bound; record linked data.
\vfill

\newpage
\spacingset{1.9} % DON'T change the spacing!
%\spacingset{1.2}%changed for thesis only; put back for submission
\npdecimalsign{.} 
\nprounddigits{3}

\section{Introduction}
In the health inequality literature, a concentration index is often used to quantify \linebreak socioeconomic-related inequality in a given health variable $Y$ \citep[see, e.g.,][and citations therein]{ataguba:2022,worldbank:2008}. If a continuous variable $I$ (we use income in this paper), is used as a socioeconomic measure, the concentration index 
\begin{align}\label{naive.eq}
    G=\frac{2cov(Y,F_I(I))}{\EE(Y)},
\end{align}
is a standardized (valued between -1 and 1) covariance between the health variable $Y\geq 0$ and the relative income ranks $F_I(I)$; see, e.g., \citet{koolman:2004}. In a case study described in detail in Section \ref{app.sec}, we use register data on all those born in Sweden during year 1950, and obtain an estimate (using observed relative ranks and sample moments) $\widehat G=\numprint{-0.37934}$ (s.e. $\numprint{0.007235}$) when $Y$ is an indicator of death between age 50 and 60, and $I$ is the average yearly income while alive during the same time period. The observed negative sign for $\widehat G$ can be interpreted as income-related inequality in survival for the 1950 Swedish cohort.  
This kind of inequality has been observed for many health indicators earlier in the literature (\citet{Wallar2020,Sommer2015,Schwendicke2014-ir,Ngamaba2017-lq}). While quantifying inequalities is of interest per say, an important question is to understand their causes. A hypothesis in this context is that increasing educational level in a population could potentially be used to decrease income-related inequalities in health (\citet{Gerdtham2016,HECKLEY201689}). 
%However, this is only a promising strategy for policy if education is causally related to income. 
Thus, if $E$ denotes education level, a counterfactual scientific question of interest is whether exposing a group of individuals to a given higher level of education would lower inequality in health (less negative concentration index). A naive investigation of this hypothesis is to estimate $G$ separately for individuals having different education levels. However, because of confounding this would most probably give a biased answer to the above counterfactual question of interest.%: what would be the value of $G$ if all individuals had, e.g., a high education level. % (say 2-3 years of senior high-school).   

A related literature studies causal pathways explaining health disparities, where for instance, one may be interested in how education can affect health through income where the latter is then a mediator; see \citet{jackson:2021,jacksonetal:2018}, and references therein. This is, however, a different question to the one addressed here where a concentration index considers health and income observed simultaneously in time, and thus income is not on the causal pathway (a mediator) between education and health. 

In this paper, we thus contribute by developing the theory and methods in order to answer counterfactual questions on concentration index \eqref{naive.eq}. Note that \citet[][Sec. 20.3.2]{ImbensRubin:book} discussed a counterfactual Gini index \citep{milanovic:1997}, which is a specific case of \eqref{naive.eq}, where $Y \equiv I$. As Imbens and Rubin, we use here the potential outcome framework \citep{rubin74,holland86} to define a target counterfactual concentration index corresponding to \eqref{naive.eq}. We contribute by giving conditions for identification, and deducing the efficient influence function (EIF) of this complex target estimand. This allows us to propose estimators of the counterfactual concentration index, which are regular asymptotic linear under certain conditions \citep{Vaart_1998,Demystifying,kennedy2016semiparametric}. In particular, these estimators are $\sqrt n$-consistent and asymptotically normal, as well as locally efficient. The implementation of the estimators is based on the fit of several nuisance functions: e.g., regressions of the health and exposure variables on covariates, as well as counterfactual ranks of the income variable. The estimators proposed have rate robustness properties allowing for convergence rates slower than $\sqrt{n}$-rate for some of the nuisance function fits. These results are, however, more complex than similar known rate robustness results for simpler causal estimands such as the widely studied average causal effect of an exposure on an outcome of interest \citep[see, e.g., the reviews by][]{smucler2019unifying,kennedy2016semiparametric,moosavi2021}. We operationalize two different EIF-based estimators, which necessitate atypical procedures to fit some of the nuisance functions involved. The proposed estimators are compared with the superefficient plug-in estimator in simulation experiments, where robustness properties to misspecification of nuisance models are also investigated. 

The paper is organized as follows. Section \ref{thm.sec} introduces the target estimand, conditions for identification, deduces the efficient influence functions, propose EIF-based estimators and give asymptotic properties. Section \ref{sim.sec} uses simulation experiments to study finite sample properties of the estimators proposed. Section \ref{app.sec} presents a case study where the effect of education on income-related health inequality for Swedes born in 1950 is studied. Section \ref{conc.sec} concludes the paper.

\section{Theory and method}\label{thm.sec}
\subsection{Assumptions and identification}
For a sample of $n$ individuals, we observe two outcomes of interest: a health variable $Y\geq 0$ (e.g., survival, disease incidence, hospitalization, etc.), and a socioeconomic variable $I$ (continuous, typically income), which we want to relate to study health inequality. Further, we consider an exposure of interest (categorical, e.g., education level $E$), as well as a vector of observed pre-exposure variables $X$. Let $E=0,1,\ldots,K-1$, i.e. $K$ levels of exposure are observed. The observed data distribution of $O=(Y,I,E,X)$ is denoted $\mp$.
For the outcome variables, we define their potential outcomes \citep{rubin74} under each exposure level: $Y(e)$ and $I(e)$, $e=0,\ldots,K-1$. For each individual in the sample, only one pair $(Y(e),I(e))$ of these potential outcomes will be observed. We assume consistency, that is the observed outcomes are $Y_i=Y_i(e)$ and $I_i=I_i(e)$ if $E_i=e$, for $i=1,\ldots,n$, and that there is no interference between individuals \citep[see,][]{rubin1991}.

We use a concentration index valued between -1 and 1 to measure health inequality with respect to a socioeconomic variable:
\begin{align*}
	G&=\frac{2cov(Y_i,F_I(I_i))}{\EE(Y_i)} =2\frac{\EE(Y_iF_I(I_i))}{\EE(Y_i)}-2\EE(F_I(I_i))=2\frac{\EE(Y_iF_I(I_i))}{\EE(Y_i)}-1,\\
	\end{align*}
 where $\EE(Y_i)> 0$, and $F_I(\cdot)$ is the cumulative distribution function of $I_i$ \linebreak  \citep[e.g.,][]{koolman:2004}. %Note that $\EE(F_I(I_i))=1/2$ here. 
 %is known: e.g., equal to 1/2 if $I_i$ is continued valued and to $(N+1)/2$ if $I_i$ can take only $N$ values. 
 Interpretation of the index $G$ corresponds to the Gini index \citep{milanovic:1997} when $Y\equiv I$. 

We are interested in the causal effects of an exposure $E$ 
and we use the potential outcomes to define counterfactual versions of the concentration index as follows:
%, where without loss of generality we let $e'=0$:
\begin{align}
	G(e)=2\frac{\EE(Y_i(e)F_e(I_i(e)))}{\EE(Y_i(e))}-1, %2\EE(F_e(I_i(e))), \hskip 1cm e=1,\ldots,K-1,
\end{align}
%where $N'$ stands for the number of individuals observed with $E=0$.
where expectations are w.r.t. the distributions of the random variables $(Y(e), I(e))$, with $\EE(Y_i(e))>0$, and $F_e(\cdot)$ is the cumulative distribution function of $I_i(e)$.
The causal parameters of interest are the contrasts $$\theta(e)=G(e)-G(0),\hskip 1cm e=1,\ldots,K-1.$$ These correspond to the causal effects of different exposure levels.
%, i.e. what would have been the health-income inequality change if those with lowest educational level $E=0$ would have all have been exposed to education level $E=e$ instead.
In an observational study, where the exposure is not randomized, we make the following assumptions to obtain identification.
%\begin{Assump}%[Ingorable exposure] 
%\label{unconfoundedness.ass}
%	 $\{Y(e),I(e)\} \indep E \mid X$, for $e=0,\ldots,K-1$.
%	 %$Y(e) \indep E \mid X$, for $e=1,\ldots,K-1$.
%\end{Assump}
\begin{Assump}\label{unconfoundedness.ass} For $e=0,\ldots,K-1$,
    \begin{itemize}
        \item[i)]  $Y(e) \indep E \mid X$,
        \item[ii)] $I(e) \indep E \mid X$,  
    \end{itemize}
    where "$\indep$" stands for "independent" \citep{dawid1979conditional}. 
\end{Assump}
\begin{Assump}\label{alternative1.ass} 
 $Y(e)\indep I(e)\mid X$, for $e=0,\ldots,K-1$.    
\end{Assump}
%\begin{Assump}
%Either of the following conditions hold:
%\label{ignorability2.ass}
%\begin{itemize} 
%    \item[i)] $Y(e)\indep I(e)\mid X$.
%    \item[ii)] $  Y(e)\cdot F_{e}(I(e))  \indep E \mid X $ and $  Y(e)\cdot g(I(e))  \indep E \mid X $ for any cadlag function $g : R \to [0,1]$ satisfying $\lim_{x \to -\infty} g(x) =0$ and $\lim_{x \to \infty} g(x) =1$.
%\end{itemize}
%\end{Assump}
\begin{Assump}\label{alternative2.ass} 
    $ Y(e) F_{e}(I(e))  \indep E \mid X $, for $e=0,\ldots,K-1$.
\end{Assump}
\begin{Assump}%[Overlap]
\label{overlap.ass}
	$\pi_e(x):=\Pr(E=e\mid X)>0$, for $e=1,\ldots,K-1$.
\end{Assump}
%\begin{Assump}[rank preservation] %\label{rank.ass}
%$R_{N'}(I_i(e))=R_{N'}(I_i(0))$, for %$e=1,\ldots,K-1$. 
%\end{Assump} 

%\item{Assumption A (rank preservation)} the treatment do no change the income ranking in one exposure group at least: i.e. shift all income distribution (homogeneous treatment effect).

%\end{description}
	Assumption \ref{unconfoundedness.ass} states that given a set of observed covariates, the exposure is ignorable for the health outcome $Y(e)$ and the socioeconomic outcome $I(e)$.
		Assumption \ref{overlap.ass} guarantees that all individuals in the study can be exposed to the different levels of $E$, and are therefore comparable. These are typical assumptions in observational studies, unless other identification information is available, e.g. the existence of mediators or instruments \citep[e.g.,][]{Gorbach:2023,fulcher2020robust}. 
	Finally, Assumptions \ref{alternative1.ass} and \ref{alternative2.ass} are two different alternative assumptions for the identification of the numerator of $G(e)$. 
		
Note, that a sufficient condition for Assumption \ref{alternative2.ass} to hold is $\{Y(e),I(e)\} \indep E \mid X$, for $e=0,\ldots,K-1$ (a slightly stronger version of Assumption \ref{unconfoundedness.ass}). Assumption \ref{alternative1.ass} may be less realistic than Assumption \ref{alternative2.ass}. However, we consider it as an alternative because it yields simpler estimators, which happen to be robust and behave well when Assumption \ref{alternative2.ass} holds instead of \ref{alternative1.ass}; see Remark \ref{miss.rem}. 

	\begin{Pro}[Identification]\label{identification.prop}
		Let Assumptions \ref{unconfoundedness.ass}  and \ref{overlap.ass} hold, as well as Assumption \ref{alternative1.ass} or \ref{alternative2.ass}. Then, $G(e)$, for $e=1,\ldots,K-1$, and thereby $\theta(e)$, are identified, where the following functionals of the observed data distribution $\mp$ identify $G(e)$, respectively for $j=1$ (Assumption \ref{alternative1.ass}) and $j=2$ (Assumption \ref{alternative2.ass}):
\begin{align}
    G(e)\equiv\Psi_j(\mp)&=2\frac{A_j(\mp)}{B(\mp)}-1, \label{Psi.eq}
\end{align}
with
\begin{align}
    A_1(\mp)&=\EE\{\EE(Y\mid X, E=e) \EE (\Xi_{e}(I,\mp)  \mid X, E=e ) \}, \label{A1.eq} \\
    A_2(\mp)&=\EE \{\EE (Y\Xi_{e}(I,\mp) \mid X, E=e)\}, \label{A2.eq}
\end{align}
and $B(\mp)= \EE \{\EE (Y\mid X, E=e) \}$, 
where $\Xi_{e}(i,\mp)=\EE \{\EE (\mathbbm{1}(I\leq i) \mid X, E=e ) \}$.
%where $\Xi_{i,e}(\mp)=E_{X}\{E_{J|X}(\mathbbm{1}(J\leq i) \mid X, E=e ) \}$.
	\end{Pro}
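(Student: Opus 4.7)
The plan is to split the identification into three pieces: the denominator $\EE(Y(e))$, the counterfactual income CDF $F_e(\cdot)$, and the numerator $\EE(Y(e) F_e(I(e)))$, with only the last piece depending on which of Assumptions \ref{alternative1.ass} or \ref{alternative2.ass} is invoked.

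First, for the denominator, I would use iterated expectations to write $\EE(Y(e)) = \EE\{\EE(Y(e)\mid X)\}$, then apply Assumption \ref{unconfoundedness.ass}(i) together with Assumption \ref{overlap.ass} to replace $\EE(Y(e)\mid X)$ by $\EE(Y(e)\mid X, E=e)$, and finally invoke consistency to obtain $\EE(Y(e)\mid X, E=e) = \EE(Y\mid X, E=e)$. This yields $\EE(Y(e)) = B(\mp)$. An analogous argument, but pointwise in $i$ and using Assumption \ref{unconfoundedness.ass}(ii), identifies $F_e(i) = \Pr(I(e)\leq i) = \EE\{\EE(\mathbbm{1}(I(e)\leq i)\mid X, E=e)\} = \Xi_e(i,\mp)$, so that $F_e(\cdot) \equiv \Xi_e(\cdot,\mp)$ as functions.

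For the numerator I would treat the two alternative assumptions separately. Under Assumption \ref{alternative1.ass}, conditioning on $X$ and using $Y(e)\indep I(e)\mid X$ gives
\begin{equation*}
\EE(Y(e) F_e(I(e))) = \EE\{\EE(Y(e)\mid X)\,\EE(F_e(I(e))\mid X)\},
\end{equation*}
after which each inner factor is identified by the same mechanism as above (unconfoundedness plus overlap plus consistency), with $F_e$ replaced by the already-identified $\Xi_e(\cdot,\mp)$. This yields $A_1(\mp)$. Under Assumption \ref{alternative2.ass}, I would condition on $X$ and directly apply $Y(e) F_e(I(e))\indep E\mid X$ to replace $\EE(Y(e) F_e(I(e))\mid X)$ by $\EE(Y(e) F_e(I(e))\mid X, E=e)$; consistency then turns this into $\EE(Y F_e(I)\mid X, E=e)$, and substituting $F_e = \Xi_e(\cdot,\mp)$ gives $A_2(\mp)$.

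The only nontrivial point is the substitution $F_e(I(e)) \mapsto \Xi_e(I,\mp)$ inside a conditional expectation on $\{E=e\}$: it relies on first identifying $F_e$ as a function (not just at a fixed $i$) so that the composition is well-defined, and then on consistency to replace $I(e)$ by $I$ on the event $E=e$. Once these two ingredients are combined in the right order, the identification of $G(e) = 2 A_j(\mp)/B(\mp) - 1$ and hence of $\theta(e) = G(e) - G(0)$ follows immediately from \eqref{Psi.eq}. I expect no other genuine obstacle; the remaining manipulations are standard iterated-expectation and consistency arguments.
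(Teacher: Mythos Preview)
Your proposal is correct and follows essentially the same route as the paper's proof: the paper likewise identifies $B(\mp)$ via the standard unconfoundedness argument, identifies $F_e(i)$ pointwise as $\Xi_e(i,\mp)$ using Assumption~\ref{unconfoundedness.ass}(ii), and then handles the numerator in two cases, factoring the conditional expectation under Assumption~\ref{alternative1.ass} and conditioning directly on $E=e$ under Assumption~\ref{alternative2.ass}. Your explicit remark that $F_e$ must be identified as a function before composing with $I(e)$ and then replacing $I(e)$ by $I$ via consistency on $\{E=e\}$ is exactly the step the paper performs (though the paper does not flag it as a subtlety).
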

		\begin{proof}
  In the sequel we will not index the expectation operator $\EE$ as long as the context is clear.
		That the identification of $\EE(Y_i(e))$ with $B(\mp)$ follows from Assumptions \ref{unconfoundedness.ass} and \ref{overlap.ass} is a classical result in the causal inference literature \citep[e.g.,][Chap 7]{HernanRobins:book}. We focus here on the numerator $\EE(Y_i(e)F_e(I_i(e)))$ and show that it is identified by $A_j$, for $j=1,2$. We show first that $F_e(i)$ is identified for a given value $i$. 
	 
		We can write:
			\begin{align*}
			   F_e(i)&= \EE (\mathbbm{1}(I(e)\leq i)),
			\end{align*}
			%where $J(e)$ has the same cdf than $I(e)$. 
   where $\mathbbm{1}(\cdot)$ is the indicator function. 
			Further,
			\begin{align*}
            F_e(i)&=\EE\{\EE(\mathbbm{1}(I(e)\leq i) \mid X )\}, \\
			   &=\EE\{\EE(\mathbbm{1}(I(e)\leq i) \mid X, E=e) \}, \\
			   &=\EE\{\EE(\mathbbm{1}(I\leq i) \mid X, E=e ) \}=:\Xi_{e}(i,\mp).
			\end{align*}
   The first equality follows from the law of total expectation, and the second equality is due to Assumptions \ref{unconfoundedness.ass}ii) and \ref{overlap.ass}, and the last one is due to consistency. The last expression is a functional of the observed data distribution $\mp$. 

		Consider now first the case when Assumption \ref{alternative1.ass} holds, then
			\begin{align*}
				\EE(Y(e)F_e(I(e)))&= \EE\{\EE(Y(e) F_e(I(e)) \mid X)\}\\
				&=\EE\{\EE(Y(e) \mid X) \EE( F_e(I(e)) \mid X) \}.
			\end{align*}
			Using further consistency, no interference, and Assumptions \ref{unconfoundedness.ass} and \ref{overlap.ass} we can write using the above identification expression for $F(I(e))$: 
			\begin{align*}
			\EE(Y(e)F_e(I(e)))&=\EE\{\EE(Y\mid X, E=e) \EE( \Xi_{e}(I,\mp) \mid X, E=e) \}. 
%   \\			&=\EE\{\EE(Y\mid X, E=e) \EE\{F(I)  \mid X, E=e \} \}. 
			\end{align*}
			The latter expression is a functional of the observed data distribution $\mp$.
			
	  Consider, finally, the case when Assumption \ref{alternative2.ass} holds instead of \ref{alternative1.ass}. Then, we can write 
	  	\begin{align*}
				\EE(Y(e)F_e(I(e)))&= \EE\{\EE(Y(e) F_e(I(e)) \mid X, E=e)\}\\
			%	&=E\{\EE(YF(I(e)) \mid E=e, X)\}\\
				&=\EE\{\EE(Y\Xi_{e}(I,\mp) \mid X, E=e)\},
			\end{align*}
which is a functional of the observed data distribution.
		\end{proof}
Thus, by this proposition, the two statistical parameters $\Psi_1$ and $\Psi_2$, have a causal interpretation under Assumption \ref{alternative1.ass} and \ref{alternative2.ass} respectively. 
	
\subsection{Efficient influence function and estimation}
%(Estimate using OR imputation (and DR?) on both the numerator and denominator. Work out the inference on the ratio through delta method.)
We use semiparametric theory \citep{Vaart_1998} to obtain locally efficient estimators of $\Psi_1$ and $\Psi_2$ with some robustness properties and in particular which allow for the use of machine learning algorithms for the estimation of the nuisance functions. 
%Moreover, if machine learning methods are used to fit nuisance models, uniformly valid inference is available. THIS IS NOT TRUE IF THE ERROR TERM IS NOT MULTIPLICATIVE IN THE NUISANCE MODELS AS FOR USUAL ATE... IT IS ONLY PARTIALLY MULTIPLICATIVE...
For this purpose we deduce first the efficient influence function of the statistical parameters of interest.

\begin{Th}[Efficient influence functions]\label{eif:thm}
Let Assumptions \ref{unconfoundedness.ass} and \ref{overlap.ass} hold. Then, the efficient influence functions for $\Psi_1$ (under Assumption \ref{alternative1.ass}) and $\Psi_2$ (under Assumption \ref{alternative2.ass}) are, respectively, for observation $\tilde o=(\tilde y,\tilde i,\tilde e,\tilde x)$,
\begin{align*}
     \varphi_j(\tilde o,\mp)&= 2\frac{\ph{{A_j}}{\tilde{o}}{\mp}}{B(\mp)}-2\frac{\ph{B}{\tilde{o}}{\mp} A_j(\mp)}{B(\mp)^2},
\end{align*}
   $j=1,2$, where
  \begin{align*}
  \ph{{A_1}}{\tilde{o}}{\mp} = & \frac{\mathbbm{1}(\tilde e=e)}{\pi_e( \tilde{x})} [\tilde{y} \Xi_{e}(\tilde{i},\mp)  -  \EE(Y \mid  \tilde{x},E=e)\EE( \Xi_{e}(I,\mp) \mid  \tilde{x},E=e)]  \\
  &+\EE(Y \mid  \tilde{x},E=e)\EE( \Xi_{e}(I,\mp) \mid  \tilde{x},E=e) -  A_1(\mp)\\
  &+ \EE [ \EE \{  Y\mid   X,E=e \} \EE\{\varphi_\Xi (I,\tilde o,\mp) \mid   X,E=e \} ], \\
\ph{{A_2}}{\tilde{o}}{\mp} = &
\frac{\mathbbm{1}(\tilde e=e)}{\pi_e( \tilde{x})} [\tilde{y} \Xi_{e}(\tilde{i},\mp)  -  \EE(Y  \Xi_{e}(I,\mp) \mid  \tilde{x},E=e)]  +\EE(Y \ \Xi_{e}(I,\mp) \mid  \tilde{x},E=e) -  A_2(\mp) \\
 &+ \EE [ \EE \{  Y \varphi_\Xi (I,\tilde o,\mp) \mid   X,E=e \} ], \\
    \varphi_B(\tilde{o},{\mp})=&\frac{\mathbbm{1}(\tilde e = e) }{\pi_e( \tilde{x})}\left(\tilde{y}-\EE\left(Y \mid \tilde{x}, E=e \right)\right)+ \EE\left(Y |  \tilde{x},E=e\right) - B(\mp),
\end{align*}  
with 
\begin{align*}
    \varphi_\Xi (j,\tilde o,\mp)=\frac{\mathbbm{1}(\tilde e=e)}{\pi_e(\tilde{x})} [ \mathbbm{1}(\tilde{i} \leq j) - \EE(\mathbbm{1}(I \leq j) \mid  \tilde{x}, E=e) ] + \EE(\mathbbm{1}(I \leq j) \mid \tilde{x},E=e) - \Xi_{e}(j,\mp).
\end{align*}

%\begin{align*}
%    \varphi_1(\tilde o,\Psi_1)&= 2\frac{\ifAai}{\Bhat} +\\
%    &\frac{\ifAbi}{\Bhat} +\\
%    &\frac{\ifAci}{\Bhat} - \\ 
%    &\frac{\Ahat}{\left[\Bhat\right]^2} \times \left\{\ifBi\right\},  
%   \end{align*}
%   and
%   \begin{align*}
%    \varphi_2(\tilde o,\Psi_2)&=2 \frac{\ifAa}{\Bhat} +\\
%    &\frac{\ifAb}{\Bhat} \\
%    &- \frac{\Ahat}{\left[\Bhat\right]^2} \times \left\{\ifB\right\},
%\end{align*}
%where $M_e(X) = \mathbb{E}[Y |X, E=e]$, $N_e(j,x) = \mathbb{E}_{I|X}[\mathbbm{1} \{ I \leq j \} |X=x, E=e]$, $\mathbbm{1}_e(\Tilde{e})$ is the indicator function $\mathbbm{1}\{e=\Tilde{e} \}$, and $\pi_e(x)$ is the propensity score function for the treatment level $e$ defined in Assumption \ref{overlap.ass}.
\end{Th}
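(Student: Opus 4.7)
The approach is via pathwise differentiation along regular one-dimensional parametric submodels $\{\mp_t\}$ with $\mp_0 = \mp$ and score $s(o) = \partial_t \log p_t(o)|_{t=0}$, using the characterization that the EIF $\varphi_j$ satisfies $\partial_t \Psi_j(\mp_t)|_{t=0} = \EE[\varphi_j(O,\mp)\, s(O)]$ and belongs to the nonparametric tangent space (which is automatic since no restrictions are placed on $\mp$). I would first apply the quotient rule to $\Psi_j = 2A_j/B - 1$ to obtain $\varphi_j = 2\varphi_{A_j}/B - 2 A_j \varphi_B/B^2$, reducing the task to computing $\varphi_B$, $\varphi_{A_1}$, and $\varphi_{A_2}$ separately.

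Two of these ingredients are immediate from classical results. The functional $B(\mp) = \EE[\EE(Y \mid X, E=e)]$ is exactly the identifying functional for the counterfactual mean $\EE[Y(e)]$ under Assumptions \ref{unconfoundedness.ass} and \ref{overlap.ass}, so its EIF is the well-known augmented IPW form, matching the stated $\varphi_B$. Likewise, for each fixed scalar $j$, $\Xi_e(j,\mp) = \EE[\EE(\mathbbm{1}(I \leq j) \mid X, E=e)]$ identifies the counterfactual CDF value $F_e(j) = \EE[\mathbbm{1}(I(e) \leq j)]$, so its EIF is the analogous AIPW expression $\varphi_\Xi(j,\tilde o,\mp)$ stated in the theorem.

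The core of the derivation is $\varphi_{A_j}$, because $A_j(\mp)$ depends on $\mp$ both through its outer expectations and through the CDF $\Xi_e(\cdot,\mp)$ evaluated at the random argument $I$. My plan is to write $A_j(\mp) = A_j^\star(\mp,\mp)$, viewing the first slot as the plug-in measure and the second as the measure defining $\Xi_e$, and to split
\begin{align*}
\partial_t A_j(\mp_t)|_{t=0} \;=\; \partial_t A_j^\star(\mp_t,\mp)\big|_{t=0} \;+\; \partial_t A_j^\star(\mp,\mp_t)\big|_{t=0}.
\end{align*}
The first piece treats $\Xi_e$ as a fixed function and is a standard AIPW calculation: for $A_1$, Assumption \ref{alternative1.ass} factorizes the integrand as $\EE(Y \mid X, E=e)\,\EE(\Xi_e(I,\mp) \mid X, E=e)$ before the AIPW template is applied, yielding the first two lines of $\varphi_{A_1}$; for $A_2$, Assumption \ref{alternative2.ass} identifies it as $\EE(Y\,\Xi_e(I,\mp) \mid X, E=e)$ directly, yielding the first two lines of $\varphi_{A_2}$. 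The second piece exploits $\partial_t \Xi_e(i,\mp_t)|_{t=0} = \EE[\varphi_\Xi(i, \tilde O, \mp)\, s(\tilde O)]$ pointwise in $i$ together with Fubini to swap the conditional expectation over $I$ with the outer expectation over the test observation $\tilde O$, yielding the third lines of the two displays.

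The main obstacle is precisely this last step: bookkeeping the dependence of $\Xi_e(I,\mp)$ on $\mp$ when its argument $I$ is itself random, and justifying the Fubini swap so that the perturbation becomes an explicit function of a single test observation $\tilde o$ rather than of $I$. A concluding mean-zero check of each candidate $\varphi_j$ under $\mp$ confirms that it lies in the nonparametric tangent space and is therefore the EIF, completing the proof.
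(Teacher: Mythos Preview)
Your proposal is correct and follows essentially the same route as the paper: quotient rule on $\Psi_j=2A_j/B-1$, cite the standard AIPW forms for $\varphi_B$ and $\varphi_\Xi$, and split the derivative of $A_j$ into a piece with $\Xi_e$ frozen (giving the first two lines) plus a piece differentiating $\Xi_e$ inside (giving the third line via Fubini). The only cosmetic difference is that the paper works directly with the point-mass contamination path $\mp_s=s\mathbbm{1}_{\tilde o}+(1-s)\mp$ and computes G\^ateaux derivatives of the densities, whereas you phrase the same calculation through general regular submodels and their scores; the decomposition and the resulting terms are identical.
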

See Supplementary Material Section \ref{proofthm1.app} for a proof.

Using these influence functions we can construct well-behaved estimators. However, consider first the naive plug-in estimators $\Psi_1(\pn)$ and $\Psi_2(\pn)$, where we replace the observed data distribution with its empirical version $\pn$. This is done by using data driven fits of the nuisance regression functions in (\ref{Psi.eq}); see Supplementary Material Section \ref{estimation.app} for details.
The plug-in estimator is well-behaved when correctly specified parametric models for the nuisance functions are fitted (ensuring parametric $\sqrt n$-rate of convergence). On the other hand, if these parametric models are misspecified or flexible machine learning methods are used (slower rate of convergence), then the plug-in estimator is typically biased, see \citet{moosavi2021,kennedy2016semiparametric} for reviews. 

There are several ways of constructing well-behaved, i.e. regular and asymptotic linear, estimators using the efficient influence functions of Theorem \ref{eif:thm}, including one-step estimators, estimating equation estimators and targeted learning \citep[see, e.g.,][]{Demystifying,van2014targeted,farrell2015robust,chernozhukov2018double}. 
We focus here on the one-step estimator 
\begin{align}\label{1s.eq}
    \Psi_j(\pn)+\frac{1}{n}\sum_{i=1}^n \varphi_j(O_i,\pn),
\end{align}
and the
estimating equation estimator, which is the solution of
\begin{align}\label{eq.eq}
  \frac{1}{n}\sum_{i=1}^n \varphi_j(O_i,\pn)=0,  
\end{align}
see Supplementary Material Section \ref{estimation.app} for the operationalization and implementation details of these estimators. 
While estimators \eqref{1s.eq} and \eqref{eq.eq} are identical when the target estimand is the average causal effect of an exposure \citep{Demystifying}, this is not the case here.

\subsection{Asymptotic properties}
%Properties: efficiency, robustness, and post-machine learning inference

The estimators \eqref{1s.eq} and \eqref{eq.eq}
%$\Psi_j(\pn^*)$ and $\Psi_j(\pn^\#)$  
are bias corrections of $\Psi_j(\pn)$ justified by the von Mises expansion:
\begin{align} \label{vonM.eq}
    \sqrt{n}\left\{\left[\Psi_j(\pn)-\Psi_j(\mp)\right]+\EE_n\left[\varphi_j(O,\pn)\right]\right\} = &\sqrt{n}\EE_n\left[\varphi_j(O,\mp)\right] \\ \nonumber
    &+\sqrt{n}(\EE_n-\EE_\mp)\left[\varphi_j(O,\mp)-\varphi_j(O,\pn) \right] \\ \nonumber
    &-\sqrt{n}R_j(\mp,\pn),
\end{align}
where $\EE_n(O) := \frac{1}{n} \sum_{i=1}^n O_i.$ The central limit theorem can be applied to the first term on the right hand side under usual regularity conditions, yielding the asymptotic normal distribution 
\begin{align}\label{gauss.eq}
    N(0,\EE_\mp[\varphi(O,\mp)^2]),
\end{align}
while the last two terms need to be controlled for, i.e. we need conditions so that they are of order $o_\mp(1)$. When this is the case, the estimators are asymptotically normal and have variance that reaches the semiparametric efficiency bound $\EE_\mp[\varphi(O,\mp)^2]$ \citep[e.g.,][]{kennedy2016semiparametric}.

The empirical process term $\sqrt{n}(\EE_n-\EE_\mp)\left[\varphi_j(O,\mp)-\varphi_j(O,\pn) \right]$ is $o_\mp(1)$ if $\varphi_j(O,\pn)$ is weakly consistent for $\varphi_j(O,\mp)$ and $\varphi_j(O,\pn)$ belongs to a $\cal P$-Donsker class \citep[][Lemma 19.24]{Vaart_1998}. This Donsker condition is not needed if sample splitting is used to estimate the nuisance models in a separate split of the data \citep{zheng2011cross,chernozhukov2018double}.

Importantly, we need to show that the reminder term $\sqrt{n}R_j(\mp,\pn)$ is controlled. We use the following conditions.
%\newpage
\begin{Assump}\label{product.ass}
    \begin{itemize}
        \item[] 
        \item[i)]For the plug-in estimators with $j=1,2$: 
        $$ \left[\frac{B(\mp)}{B(\pn)} - 1 \right] \cdot [\ps{j}{\mp} -\ps{j}{\pn}] \asymp o_{\mp}(\sqrt{n}^{-1}).$$
        \item[ii)]  The plug-in estimator $B(\pn)^{-1}$ is bounded in probability:
        $$ \forall \varepsilon \quad \exists \delta_\varepsilon, N_\varepsilon \quad \mbox{such that}: P(|B(\pn)^{-1}| \geq \delta_\varepsilon) \leq \varepsilon \quad \forall n>N_\varepsilon .$$
        \item[iii)] For the nuisance functions $M_e({x})=\EE_\mp (Y\mid X=x, E=e)$ and $\pi_e(x)$:
        $$ \e{\mp}{ \left( \frac{\pi_e(X)}{\hat{\pi}_e(X)}  -1\right) (M_e(X) - \hat{M}_e(X) )  } \asymp o_{\mp}(\sqrt{n}^{-1}).$$
%        [and $\hat{\pi}_e(X)$ is bounded away from zero. here?]
        \item[iv)] For the nuisance functions $N_e(j,x) = \mathbb{E}_\mp[\mathbbm{1} \{ I \leq j \} |X=x, E=e]$, $M_e({x})$ and $\pi_e(x)$:
        $$ \e{\mp}{ \left( \frac{\pi_e(X)}{\hat{\pi}_e(X)}  -1\right) \hat\EE_{\pn(\tilde X)}  \left[ \hat{M}_e(\Tilde{X})  \hat{\mathbb{E}}_{\pn(\tilde I \mid \tilde X)}\left[ \left. N_e(\tilde I, X)- \hat{N}_e(\tilde I, X) \right| \Tilde{X}, E=e \right]  \right]  } \asymp o_{\mp}(\sqrt{n}^{-1}).$$
        We use here and in the sequel $\hat\EE_{\pn}$ to denote the estimation of $\EE_\mp$.
        \item[v)] For the nuisance functions $M_e({x})$, $\pi_e({x})$ and $\Xi_{e}(i,\mp)$: 
        \begin{align*}
            \EE_{\mp} &\left[\left( \frac{\pi_e(X)}{\hat{\pi}_e(X)} -1 \right) \right. \\
            &\left.\cdot\left( M_e(X)\mathbb{E}_{\mathcal P}(\Xi_{e}(I,\pn)| X,E=e)- \hat{M}_e(X) \hat{\mathbb{E}}_{\hat{\mathcal P}_n (\tilde I \mid X) }(\Xi_{e}(\tilde I,\pn)|X,E=e) \right)\right]      \asymp o_{\mp}(\sqrt{n}^{-1}).
        \end{align*}
         \item[vi)] For the nuisance functions $N_e(j,x) = \mathbb{E}_\mp[\mathbbm{1} \{ I \leq j \} |X=x, E=e]$, $M_e({x})$ and $\pi_e(x)$:
        $$ \e{\mp}{ \left( \frac{\pi_e(X)}{\hat{\pi}_e(X)}  -1\right) \hat\EE_{\pn(\tilde X)}{\left[   \hat{\mathbb{E}}_{\pn(\tilde Y ,\tilde I \mid \tilde X)}\left[ \tilde Y (\left. N_e(\tilde I, X)- \hat{N}_e(\tilde  I, X) )\right| \tilde X, E=e \right]  \right]}  } \asymp o_{\mp}(\sqrt{n}^{-1}).$$
        \item[vii)] For the nuisance functions $M_e({x})$, $\pi_e({x})$ and $\Xi_{e}(i,\mp)$: 
        \begin{align*}
            \EE_{\mp} &\left[ \left( \frac{\pi_e(X)}{\hat{\pi}_e(X)} -1 \right) \right. \\
            &\left. \cdot \left( \mathbb{E}_{\mp}(Y \Xi_e(I,\pn)|X,E=e)-  \hat{ \mathbb{E}}_{\hat{\mathcal P}_n (\tilde Y, \tilde I \mid X) }(\tilde Y \Xi_e( \tilde I,\pn)|X,E=e) \right) \right]     \asymp o_{\mp}(\sqrt{n}^{-1}).
        \end{align*}
        \item[viii)]  $  Y(e)\cdot g(I(e))  \indep E \mid X $ for any cadlag function $g : R \to [0,1]$ satisfying $\lim_{x \to -\infty} g(x) =0$ and $\lim_{x \to \infty} g(x) =1$.
        \item[ix)] For the nuisance functional $\Lambda(h )$ which is defined by:
        $$\Lambda(h ) : = \e{\mp}{ M_e(X)  \e{\mp}{ h(I) | X , E=e }  -  \hat{M}_e( X) \hat\EE_{\pn( \tilde I \mid  X)}{\left[ h(\tilde I)  | X, E=e \right] }   } ,$$
        equipped with the norm $||\Lambda||_{sup} = \sup_{||h||\leq 1}{ |\Lambda (h)| } $, assume:
        $$ \EE_\mp{[(\Xi_e(I,\pn) - \Xi_e(I,\mp))^2]^\frac{1}{2}} ||\Lambda||_{sup}^\frac{1}{2} \asymp o_{\mp}(\sqrt{n}^{-1}).$$
        \item[x)] The quantities $\hat{M}_e(\tilde X) \hat\EE_{\pn( \tilde I \mid \tilde X)}{\left[ \Xi_e(\tilde I,\pn)  |\tilde X, E=e \right]}$ and $\hat{M}_e(\tilde X) \hat\EE_{\pn( \tilde I \mid \tilde X)}{\left[ \Xi_e(\tilde I,\mp)  |\tilde X, E=e \right]}$ belong to a Donsker class, and both are weakly consistent estimators of 
        
        $M_e(X) \EE_{\mp}{\left[ \Xi_e( I,\mp)  | X, E=e \right]}$.

        \item[xi)] For the nuisance functional $\Lambda_2(h )$, which is defined by
        $$\Lambda_2(h ) : = \e{\mp}{\e{\mp}{ Y h(I) | X , E=e }  -  \hat\EE_{\pn( \tilde Y, \tilde I \mid  X)}{\left[\tilde Y h(\tilde I)  | X, E=e \right] }   } ,$$
        equipped with the norm $||\Lambda_2||_{sup} = \sup_{||F||\leq 1}{ |\Lambda_2 (F)| }  $ , assume:
        $$ \EE_\mp{[(\Xi_e(I,\pn) - \Xi_e(I,\mp))^2]^\frac{1}{2}} ||\Lambda_2||_{sup}^\frac{1}{2} \asymp o_{\mp}(\sqrt{n}^{-1}).$$
        \item[xii)] The quantities $\hat\EE_{\pn( \tilde Y,\tilde I \mid \tilde X)}{\left[\tilde Y \Xi_e(\tilde I,\pn)  |\tilde X, E=e \right]}$ and $ \hat\EE_{\pn(\tilde Y, \tilde I \mid \tilde X)}{\left[\tilde Y \Xi_e(\tilde I,\mp)  |\tilde X, E=e \right]}$ belong to a Donsker class, and both are weakly consistent estimators of 
        
        $\EE_{\mp}{\left[Y \Xi_e( I,\mp)  | X, E=e \right]}$.
    \end{itemize}
\end{Assump}
Assumption \ref{product.ass}i) is a product rate condition for the plug-in estimators $B(\pn)$ and $\Psi_1(\pn)$. It allows both estimators to be estimated at a lower rate than $\sqrt{n}$.
Assumption \ref{product.ass}iii) is the usual product rate condition for the estimation of $B(\mp)$ \citep{farrell2015robust}.
Assumptions \ref{product.ass}iv)-vii) are further product rate conditions needed for the estimation of $A_j(\mp)$, $j=1,2$. Further, Assumption \ref{product.ass}viii) is a stronger version of the identification Assumption \ref{alternative2.ass}.
Finally, further rate conditions and Donsker conditions are used to control the remainder terms: Assumption \ref{product.ass}ix)-x) for $A_1(\mp)$ and Assumption \ref{product.ass}xi)-xii) for $A_2(\mp)$. 
%Note that Assumptions \ref{product.ass}ix) and xi)  imply that root-$n$ convergence for one of the two terms in the product is needed. 
As earlier, the Donsker conditions can be discarded if sample splitting is used to estimate the nuisance models in a separate split of the data.

%Here, we show that if a correctly specified model for $\EE(Y|X,E=e)$ is available, it will be possible to perform the robustness analysis for the models of income $\EE(I|X,E=e)$ and propensity score $\EE(E|X)$.
%By using the influence function for correcting the bias of the plug-in estimator, a remainder term will be left, which is explained in demystifying with von Mises expansion. Debiasing the plug-in estimator was interesting if the remainder term could be controlled with the help of some conditions. For the well-known parameter, ATE considering the remainder term leads to the double robustness analysis that determines bounds for convergence rates in order to ensure that the remainder term is asymptotically zero, and the corrected estimator is asymptotic normal.

Note that the remainder term can be written as:
\begin{equation*}
    %\sqrt{n} R_j(\mp , \hat{\mp}_n) = - \sqrt{n} \E_\mp \left\{ \phi (O, \hat{\mp}_n) \right\} - \sqrt{n} \left\{ \Psi(\hat{\mp}_n) - \Psi(\mp) \right\}. 
    \sqrt{n} R_j(\mp , \hat{\mp}_n) =\sqrt{n} \left( \Psi_j(\pn ) + \E_\mp \left[ \varphi_j (O, \hat{\mp}_n) \right] - \Psi_j( \mp ) \right),
\end{equation*}
and can be controlled as follows.
\begin{Th}[Controlling the remainder term]\label{remainder.thm}
    Under Assumptions \ref{unconfoundedness.ass}, \ref{alternative1.ass}, \ref{overlap.ass}, \ref{product.ass}i)-v), and \ref{product.ass}viii)-x) 
    $$\sqrt{n} \left( \Psi_1(\pn ) + \E_\mp \left[ \varphi_1 (O, \hat{\mp}_n) \right] - \Psi_1( \mp ) \right) \asymp o_{\mp}(1).$$ 
Further, under Assumptions  \ref{unconfoundedness.ass}, \ref{overlap.ass}, \ref{product.ass}i)-iii), \ref{product.ass}vi)-viii), and \ref{product.ass}xi)-xii) 
$$\sqrt{n} \left( \Psi_2(\pn ) + \E_\mp \left[ \varphi_2 (O, \hat{\mp}_n) \right] - \Psi_2( \mp ) \right) \asymp o_{\mp}(1).$$ 
\end{Th}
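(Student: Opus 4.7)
The plan is to exploit the ratio structure $\Psi_j(\mp)=2A_j(\mp)/B(\mp)-1$ together with the corresponding efficient influence function $\varphi_j = 2\varphi_{A_j}/B - 2 A_j \varphi_B / B^2$. Write $\hat A = A_j(\pn)$, $\hat B = B(\pn)$, $A = A_j(\mp)$, $B = B(\mp)$, $\varphi_A^\ast = \EE_\mp[\varphi_{A_j}(O,\pn)]$ and $\varphi_B^\ast = \EE_\mp[\varphi_B(O,\pn)]$. The identity $1/\hat B - 1/B = (B-\hat B)/(B\hat B)$ produces the algebraic decomposition
\begin{align*}
R_j(\mp,\pn) = \frac{2}{\hat B}\bigl[\hat A - A + \varphi_A^\ast\bigr] - \frac{2A}{\hat B\, B}\bigl[\hat B - B + \varphi_B^\ast\bigr] + \frac{2\varphi_B^\ast}{\hat B}\left(\frac{A}{B} - \frac{\hat A}{\hat B}\right).
\end{align*}
The strategy is to show that each of the three summands on the right is $o_\mp(n^{-1/2})$; Assumption \ref{product.ass}ii) supplies boundedness of $\hat B^{-1}$ in probability and then lifts the conclusion to $\sqrt n R_j = o_\mp(1)$.

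I would first dispatch the two more tractable pieces. The middle summand is the standard doubly robust remainder for $B$: a direct calculation from the definition of $\varphi_B$ yields
\[
\hat B - B + \varphi_B^\ast = \EE_\mp\!\left[\left(\tfrac{\pi_e(X)}{\hat\pi_e(X)}-1\right)\bigl(M_e(X)-\hat M_e(X)\bigr)\right],
\]
which is $o_\mp(n^{-1/2})$ by Assumption \ref{product.ass}iii). For the cross summand, substituting $\varphi_B^\ast = (B-\hat B) + R_B^{(B)}$ with $R_B^{(B)}$ the quantity just bounded, and using $A/B - \hat A/\hat B = \bigl(\Psi_j(\mp)-\Psi_j(\pn)\bigr)/2$, gives
\[
\frac{2\varphi_B^\ast}{\hat B}\!\left(\frac{A}{B} - \frac{\hat A}{\hat B}\right) = \left[\frac{B(\mp)}{B(\pn)}-1\right]\bigl(\Psi_j(\mp)-\Psi_j(\pn)\bigr) + \frac{R_B^{(B)}}{\hat B}\bigl(\Psi_j(\mp)-\Psi_j(\pn)\bigr),
\]
whose leading term is exactly the product in Assumption \ref{product.ass}i) and whose tail is negligible under consistency of the plug-in.

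The main obstacle is the first summand, the $A_j$-remainder $\hat A - A + \varphi_A^\ast$. To analyse it I would split $\varphi_{A_j}$ into its three natural components (an inverse-propensity piece involving $\tilde y\,\Xi_e(\tilde i,\mp)$, a direct-regression piece, and a $\varphi_\Xi$-correction for the nested functional $\Xi_e$). Through repeated adding and subtracting of mixed terms inside $\hat A - A$, these pieces can be rearranged into products of single-nuisance estimation errors matching Assumptions \ref{product.ass}iv)--v) in the case $j=1$ and Assumptions \ref{product.ass}vi)--vii) in the case $j=2$. Assumption \ref{product.ass}viii), a uniform-in-$g$ strengthening of the identification condition, is what licenses replacing $\Xi_e(\cdot,\mp)$ by its plug-in $\Xi_e(\cdot,\pn)$ inside the IPW expansion while retaining a meaningful conditional expectation interpretation of the resulting population quantity.

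The truly subtle contribution is the piece coming from the $\varphi_\Xi$-correction, because $\Xi_e(I,\pn)$ is itself a plug-in of an infinite-dimensional nuisance functional and cannot be expanded coordinate-wise. After linearisation the residual takes the form $\Lambda\bigl(\Xi_e(\cdot,\pn)-\Xi_e(\cdot,\mp)\bigr)$ for the operator $\Lambda$ of Assumption \ref{product.ass}ix) (and $\Lambda_2$ of Assumption \ref{product.ass}xi) in the $j=2$ case). The dual-norm bound $|\Lambda(h)|\leq \|\Lambda\|_{sup}\,\|h\|$ combined with Cauchy--Schwarz and the $L^2$-rate on $\Xi_e(\cdot,\pn)-\Xi_e(\cdot,\mp)$ delivers the $o_\mp(n^{-1/2})$ bound stipulated in \ref{product.ass}ix) (resp.\ xi)). To interchange $\hat\Xi_e$ with $\Xi_e$ inside the empirical averaging without inflating the remainder, I would finally invoke the Donsker/consistency conditions of Assumption \ref{product.ass}x) (resp.\ xii)) and a standard empirical-process equicontinuity argument. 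The same bookkeeping runs in parallel for both $j=1$ and $j=2$, with $M_e(X)$ replaced by $\EE_\mp[Y\Xi_e(I,\cdot)\mid X,E=e]$ in the $j=2$ case, which accounts for the different sets of assumptions invoked for each target in the theorem statement.
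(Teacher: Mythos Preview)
Your proposal is correct and follows essentially the same route as the paper: the ratio decomposition into an $A$-remainder, a $B$-remainder, and a cross term controlled by Assumption~\ref{product.ass}i)--ii), followed by expansion of $R_A$ into product-rate pieces matching \ref{product.ass}iv)--v) (resp.\ vi)--vii)), the Riesz/Cauchy--Schwarz argument for the $\Lambda$-functional under \ref{product.ass}ix) (resp.\ xi)), and the Donsker equicontinuity step under \ref{product.ass}x) (resp.\ xii)). The only cosmetic difference is that the paper's cross term carries the factor $(B-\hat B)/\hat B$ directly, whereas you write $\varphi_B^\ast/\hat B$ and then substitute $\varphi_B^\ast = (B-\hat B) + R_B^{(B)}$; the two are equivalent after one line of algebra.
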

See Supplementary Material Section \ref{proofthm2.app} for a proof.

The direct consequence of this last result is that under the assumptions mentioned in this subsection the one-step estimator is regular asymptotic linear, and has asymptotically normal distribution \eqref{gauss.eq}. Note that this theorem also applies to the estimating equation estimator \eqref{eq.eq}, but where the assumptions have to hold for $\hat{\mp}^*_n$, a different empirical distribution. Because the latter is not explicit the assumptions are difficult to verify.

\begin{Rem}\label{miss.rem}
    The product conditions used in the theorem above imply the following rate robustness properties.
    If the nuisance functions $\pi_e(X)$ and $N_e(I,\cal P)$ are correctly specified (i.e. fitted at $\sqrt{n}$-rate) then the asymptotic results hold even if the outcome model $M_e(X)$ and other nuisance functions involving the outcome $Y$ are fitted at a slower rate. In such situations, the estimator based on Assumption \ref{alternative1.ass} will be robust, i.e. behave well even when only the weaker and more realistic Assumption \ref{alternative2.ass} holds.
    Furthermore, if we have a correct parametric model for the joint distribution of $(Y(e),I(e))\mid X$, then we can estimate the propensity score $\pi_e(X)$ at a slow rate. Finally, the product conditions in Assumptions \ref{product.ass}ix) and xi) imply that root-$n$ convergence for one of the two terms in the products is needed.  
\end{Rem}

\section{Simulation study}\label{sim.sec}
We perform numerical experiments in order to study the finite sample properties of the estimators proposed. 
%We consider both outcomes $Y$ and $I$ to be linearly dependent on covariates plus an additive normal noise, which allows us to use correct models for the nuisance functions. 
%Additionally, this choice assures us that the probit model works for the step function that compares two realizations of one of the outcome values with each other.
We consider an exposure $E$ with three levels, for which we define three potential incomes $I(0)$, $I(1)$, and $I(2)$, as well as three potential health-related outcomes $Y(0)$,$ Y(1)$, and $Y(2)$. All potential outcomes are assumed to be determined by values of two covariates $X_1 \sim N(1,1)$, and $X_2 \sim N(10,1)$ plus additive normally distributed noises as follows:
\begin{equation*}
    \begin{aligned}
        I(0) &= X_1- 0.1  X_2 + \epsilon_0, \\
        I(1) &= X_1- 1.5  X_2 + \epsilon_1, \\
        I(2) &= 20 X_1- X_2 + 10 + \epsilon_2,
    \end{aligned}
    \qquad
        \begin{aligned}
        Y(0) &= 10 X_1+ X_2 + \epsilon_3, \\
        Y(1) &= 10 X_1+ X_2 + 8 + \epsilon_4, \\
        Y(2) &= 10 X_1+ X_2 + 18 + \epsilon_5, 
    \end{aligned}
\end{equation*}
where $\epsilon_j \sim N(0,2)$ for $j=0, \dots ,5$.
Treatment levels ($E=0,1,2$) are assigned with the probabilities:
\begin{equation*}
    \begin{split}
        \pi_0(X) &= \frac{1}{1 + \exp(l_1) + \exp(l_2)},  \\
        \pi_1(X) &= \frac{\exp(l_1)}{1 + \exp(l_1) + \exp(l_2)}, \\
        \pi_2(X) &= \frac{\exp(l_2)}{1 + \exp(l_1) + \exp(l_2)}, 
    \end{split}
\end{equation*}
where $l_1 = (5(X_1 + X_2) - 55)/ (10\sqrt2)$, and $l_2 = (3\sqrt 2 X_1 -43\sqrt 2 + 4\sqrt 2X_2)/ (10\sqrt 2)$.
\npdecimalsign{.} 
\nprounddigits{3}
To approximate the potential concentration indexes for this data generating process, we use a random sample of size one million and  obtain $G(0)\approx \numprint{0.12486}$, $\theta(1)\approx   \numprint{-0.1887868}$ and $\theta(2)\approx  \numprint{ 0.02209007}$. These are used as true values when computing biases and empirical coverages.  

We use sample sizes $n=1000,2000$, and 1000 replicates. The estimators studied are the plug-in, and two efficient influence function (EIF) based estimators: the one-step estimator \eqref{1s.eq} (denoted EIF\_1S) and the estimating equation estimator \eqref{eq.eq} (denoted EIF\_EQ).
For this purpose we use the EIF obtained in Theorem \ref{eif:thm} under Assumption \ref{alternative1.ass}, although robustness to the assumption is studied below.
The deployment of these estimators is described in the Supplementary Material, Section \ref{estimation.app}, including how the nuisance functions are fitted. All computations are done using the software R \citep{R}. 

Results on bias, standard errors (both Monte Carlo and estimated), and coverage (using the estimated standard errors) for $\theta(1),\theta(2)$ are displayed in Table \ref{tab:1} and \ref{tab:1b} for $n=1000,2000$ respectively. For completeness, we give also results on the estimation of $G(0)$ in Supplementary Material Section \ref{sim.app}, Tables \ref{tab:5} and \ref{tab:6}.
Estimated standard errors are obtained by computing the sample variance of the fitted efficient influence function. 
Note that this corresponds to the asymptotic semiparametric efficiency bound.
The plug-in estimator is on the other hand superefficient if the correct nuisance models are used \citep{moosavi2021} and thus using this same standard error may yield conservative empirical coverages in such cases. 
To investigate robustness to misspecification of the nuisance functions (Remark \ref{miss.rem}) we replace the covariates used to simulate the data with transformations: $\tilde X_j=\log X_j^2$, $j=1,2$. Thus, the tables include results on four situations: when all nuisance models are correctly specified (labeled "Correct models"); when only exposure probabilities $\pi_e(X)$ are misspecified ("Incorrect model for $\pi$"); when only outcome models, $M_e(X)$ are misspecified ("Incorrect model for $Y$"); when all nuisance functions mentioned in Assumption \ref{product.ass} are misspecified ("Incorrect model for all").
The results confirm the following expected results from the theory presented: a) the plug-in estimator performs well when $M_e(X)$ is correctly specified 
but can be seriously biased when the latter nuisance model is misspecified, yielding in such cases low coverages; b) the EIF\_1S estimator behaves well when only one of the models for $E$ or $Y$ given $X$ is misspecified, yielding low bias and correct coverages; c) none of the methods is unbiased and/or give expected coverage when all models are misspecified. Let us expand on an important aspect of point b) above: using $\tilde X_j=\log X_j^2$ in a model for $Y\mid \tilde X,E=e$ corresponds to a situations where Assumption \ref{alternative1.ass} does not hold because $\tilde X_j$ is not a one-to-one transformation of $X_j$. Thus, the results are in line with Remark \ref{miss.rem}, where we mentioned the robustness of the estimator EIF\_1S to Assumption \ref{alternative1.ass}. 

Note that EIF\_EQ does not behave as EIF\_1S in all our experiments. When the plug-in estimator is biased, while EIF\_1S corrects this bias, EIF\_EQ corrects it only partially (one can show analytically that the bias correction is approximately half the one of EIF\_1S). This is an interesting result per se since these estimators are in many situations equivalent, for instance, for the estimation of $B(\mp)$ \cite[e.g.,][]{Demystifying}.

\npdecimalsign{.}
\nprounddigits{3}

\begin{table}\caption{\label{tab:1}Sample size $n=1000$: Bias, standard errors (both estimated, est sd, and Monte Carlo, MC sd), and empirical coverages for 95\% confidence intervals over 1000 replicates.  Three cases: Correct parametric models for all nuisance functions, only $\pi$ misspecified, only model for $Y$ misspecified, and all nuisance functions misspecified.}
    \centering 
\begin{tabular}{l n{2}{3} n{2}{3} n{2}{3} n{3}{3} n{2}{3} n{2}{3} n{2}{3} l}
\hline\hline 
\multicolumn{1}{c}{} &\multicolumn{4}{c}{$\theta(1) $}  &\multicolumn{4}{c}{$\theta(2)$} \\
%\hline
    \multicolumn{1}{c}{Estimator}                     & \multicolumn{1}{c}{bias}            & \multicolumn{1}{c}{MC sd}         & \multicolumn{1}{c}{est sd}       & \multicolumn{1}{c}{coverage} & \multicolumn{1}{c}{bias}            & \multicolumn{1}{c}{MC sd}         & \multicolumn{1}{c}{est sd}        & \multicolumn{1}{c}{coverage} \\
    \hline
\multicolumn{9}{c}{Correct models} \\
%\hline\hline

   Plug-in     & 0.000185254333940343 & 0.0175952520106573 & 0.018737305291427  &  .957       & 0.000636845699764521 & 0.0129038795351558 & 0.0139409380836445 & .962       \\
   EIF\_1S    & 0.00136822878935017  & 0.0188348362497189 & 0.018737305291427  & .947       & 0.00156449280318033  & 0.0140134881121671 & 0.0139409380836445 & .939       \\
 EIF\_EQ &  0.001431752 &0.01797308 &0.01864443      &.949 &0.001309048 &0.01305365 &0.01380422      &.961  \\
\hline
\multicolumn{9}{c}{Incorrect model for $\pi$} \\
%\hline\hline

   Plug-in     & 0.000272423766592977 & 0.0168233413018301 & 0.0181806207612763 & .965       & 0.000651496496148013 & 0.0127738981245749 & 0.0135112273321161 & .956       \\
   EIF\_1S       & 0.000670655102528672 & 0.0182548477864504 & 0.0181806207612763 & .947       & 0.000840924211898297 & 0.0138137215532146 & 0.0135112273321161 & .928       \\
EIF\_EQ   &.00002302871 &0.01737899 &0.01812859      &.958 &0.0005529155 &0.01272438 &0.01345255      &.955 \\
\hline
\multicolumn{9}{c}{Incorrect model for $Y$} \\
%\hline\hline

   Plug-in    & 0.105966167887004    & 0.0109573215296827 & 0.0221250593968208 & .0         & 0.00681888923391222  & 0.0120234105949686 & 0.0221279279134863 & 1.000      \\
   EIF\_1S & 0.00466522457655144  & 0.0211456089350859 & 0.0221250593968208 & .958       & 0.00585510493837474  & 0.0206972661230586 & 0.0221279279134863 & .966       \\
 EIF\_EQ   &0.05137717 &0.01485368 &0.02232346      &.317 &0.001325161 &0.01467257 &0.02225747      &.998 \\
\hline
\multicolumn{9}{c}{Incorrect model for all} \\
%\hline\hline

  Plug-in    & 0.111133078083875    & 0.0143257321911549 & 0.0227526256561265 & .0         & 0.0131803902983607   & 0.0167897652213987 & 0.0223324831507474 & .968       \\
  EIF\_1S   & 0.00665161434637107  & 0.023373576756203  & 0.0227526256561265 & .936       & 0.0205148796910261   & 0.0203127140814572 & 0.0223324831507474 & .895    \\  
   EIF\_EQ   &0.05278652 &0.01615106 &0.02285149      &.315 &0.003617918 &0.01468901  &0.0224546      &.994\\
  \hline
\end{tabular}
\end{table}

\begin{table}\caption{\label{tab:1b}Sample size $n=2000$: Bias, standard errors (both estimated, est sd, and Monte Carlo, MC sd), and empirical coverages for 95\% confidence intervals over 1000 replicates. Three cases: Correct parametric models for all nuisance functions, only $\pi$ misspecified, only model for $Y$ misspecified, and all nuisance functions misspecified.}
    \centering 
\begin{tabular}{l n{2}{3} n{2}{3} n{2}{3} n{3}{3} n{2}{3} n{2}{3} n{2}{3} n{3}{3}}
\hline\hline
\multicolumn{1}{c}{} &\multicolumn{4}{c}{$\theta(1)$}  &\multicolumn{4}{c}{$\theta(2)$} \\
  % \hline
    \multicolumn{1}{c}{Estimator}                     & \multicolumn{1}{c}{bias}            & \multicolumn{1}{c}{MC sd}         & \multicolumn{1}{c}{est sd}       & \multicolumn{1}{c}{coverage} & \multicolumn{1}{c}{bias}            & \multicolumn{1}{c}{MC sd}         & \multicolumn{1}{c}{est sd}        & \multicolumn{1}{c}{coverage} \\
    \hline

\multicolumn{9}{c}{Correct models} \\
%\hline\hline
 Plug-in     & 0.000523508991721711 & 0.0119745736448471  & 0.0131719945105421 & .968       & 0.000780782656440798 & 0.00892298460608717 & 0.00976426342035617 & .961       \\
EIF\_1S     & 0.0012388993354841   & 0.0126632270416635  & 0.0131719945105421 & .952       & 0.00122504427446368  & 0.00946214931249385 & 0.00976426342035617 & .951       \\
EIF\_EQ   &.00001247329 &0.01248869 &0.01319159      &.963 &0.0006908129 &0.008885754 &0.00981499      &.969 \\
\hline
\multicolumn{9}{c}{Incorrect model for $\pi$} \\
%\hline\hline

Plug-in      & 0.000134111608159751 & 0.0125237479403676  & 0.0128301260008077 & .956       & 0.000256890638563489 & 0.00912095504685571 & 0.00951428621775183 & .958       \\
EIF\_1S & 0.000450691734868613 & 0.0134636204029603  & 0.0128301260008077 & .940       & 0.000573162304902092 & 0.00986204647360485 & 0.00951428621775183 & .936       \\
EIF\_EQ   &0.0008755285 &0.01295122 &0.01284357      &.949 &0.0004864682 &0.009313775 &0.009546809      &.959\\
\hline
\multicolumn{9}{c}{Incorrect model for $Y$} \\
%\hline\hline

Plug-in   & 0.106406242066052    & 0.00784332649252163 & 0.015705804708534  & .0         & 0.00596645089746425  & 0.00879531596863967 & 0.015652917257772   & .998       \\
EIF\_1S & 0.00437144702300074  & 0.0148398061092649  & 0.015705804708534  & .962       & 0.00573725243116352  & 0.0142823799216086  & 0.015652917257772   & .951       \\
EIF\_EQ   &0.05204728 &0.01024644 &0.01576835       &.026 &0.001138834 &0.01014959  &0.0156909      &.994\\
\hline
\multicolumn{9}{c}{Incorrect model for all} \\
%\hline\hline

Plug-in    & 0.112992430914285    & 0.0101546124169942  & 0.0160344943732491 & .0         & 0.013365839386792    & 0.0118174407261631  & 0.0157741375167042  & .936       \\
EIF\_1S   & 0.00686342689453401  & 0.0153563620108288  & 0.0160344943732491 & .944       & 0.0207340124911383   & 0.0134100865405526  & 0.0157741375167042  & .792    \\ 
EIF\_EQ   &0.05312864 &0.01065322 &0.01603876       &.024 &0.003861742 &0.0100146 &0.01580976      &.998\\
\hline  
\end{tabular}
\end{table}

\newpage

\section{Education effect on income-related health inequality}\label{app.sec}

We study here the effect of education on income-related health inequality. We use a database linking a set of socioeconomic and health registers on the whole Swedish population \citep{simsam:2016}. We focus on the cohort of all those born in Sweden during 1950 and living in Sweden in 2000, since we use as follow-up period the years 2000-2010 (i.e., between age 50 and 60). Focusing on a single cohort mitigates issues related to interference between units when it comes for instance to potential incomes.   

As health variable $Y$ we consider several measures, although our main analysis will be based on survival, i.e. $Y=1$ for individual dying during 2000-2010 and $Y=0$ otherwise. 
We also perform analyses for the two other health variables measured during the follow-up period: the number of hospitalization days (inpatient, all diagnoses), and the number of hosptialization days (inpatient, cardio-vascular diagnoses). 
As income measure $I$, we consider the average annual income while alive during 2000-2010. To stabilize the variance of $I$ we use the transformation $(\cdot + 1)^{0.2}$.

The education exposure $E$ has 3 levels (9 years or less schooling, 2-3 years senior high-school, or higher). We measure education level in the data in the year 1990. 
Pre-exposure covariates $X$ (potential confounders) are: Sex;  Living place at age 15 (year 1965; rural municipalities $= 1$, mixed  $= 2$, large cities  $= 3$; according to the Swedish Agency for Economic and Regional Growth, \citeauthor{Tillv}, \citeyear{Tillv}); Father's income, accumulated between years 1968-1970 (using the same transformation than for $I$); Place where the mother lived in 1968 (rural municipalities $= 1$, mixed  $= 2$, large cities  $= 3$).
The size of the cohort is 104172, after removing 763 because education was missing, and 1 because some covariates were missing. Further, 410 individuals are discarded in order to enforce the positivity Assumption \ref{overlap.ass}, i.e. they had too small fitted propensity scores ($\hat\pi_e(x)<0.01$ for $e=0,1$ or 2).

\npdecimalsign{.}
\nprounddigits{3}

\begin{table}[H]
\caption{ \label{tab:Death:CI} Cohort 1950, health variable survival: Naive, plug-in and one-step EIF-based estimates of the counterfactual concentration index $G(e)$, together with their standard errors (sd); the latter obtained using the influence function for the one-step estimator (yields a conservative sd for the plug-in estimator).}\label{tab:concentration1}
 \centering
 \begin{tabular}{ll n{2}{3} n{2}{3} n{2}{3}  n{2}{3} n{2}{3} }
\hline\hline
\multicolumn{1}{c}{E} & \multicolumn{1}{c}{n}    & \multicolumn{1}{c}{Naive}  & \multicolumn{1}{c}{sd}  & \multicolumn{1}{c}{Plug-in}  & \multicolumn{1}{c}{EIF\_1S}      &   \multicolumn{1}{c}{sd}       \\
\hline
0                     & 25343                 & -0.4242022576          &0.012503008         & 0.02698795308           & -0.4126122485                               & 0.01825503695         
 \\
1                     & 47435                 & -0.3882284667         &0.011990627          & 0.03819109673           & -0.3542933152                                & 0.01501226493          
\\
2                     & 30984                 & -0.3231537962        &0.019974175           & 0.02924321266           & -0.2988022031                             & 0.02471702375         
\\
\hline
\end{tabular}  
\end{table}

\begin{table}[H]\caption{ \label{tab:Death:contrasts}Cohort 1950, health variable survival: Naive, plug-in and one-step EIF-based estimates of the causal effects of education, together with standard errors (sd).}
    \centering 
   \begin{tabular}{l n{2}{3} n{2}{3} n{2}{3} n{2}{3} n{2}{3}  }
\hline\hline
    & \multicolumn{1}{c}{Naive} & \multicolumn{1}{c}{sd} & \multicolumn{1}{c}{Plug-in} & \multicolumn{1}{c}{EIF\_1S} &    \multicolumn{1}{c}{sd} \\
\hline
$\theta(1)$  &0.035973790865596 &0.017323404  &0.01120314365                & 0.05831893335                               & 0.02370433286 
\\
$\theta(2)$   &0.101048461380499 &0.023564653 &0.002255259583               & 0.1138100455                                 & 0.0307720666 
\\
\hline

\end{tabular}
\end{table}

We present in Tables \ref{tab:Death:CI} and  \ref{tab:Death:contrasts} 
the obtained concentration indexes and the contrasts $\widehat\theta(1)$ and $\widehat\theta(2)$ for the survival health variable (results for the hospitalization variables are provided in Supplementary Material Section \ref{hospital.sec}). Implementation details are given in Supplementary Material Section \ref{estimation.app}. 
We see that the naive concentration indexes in Table \ref{tab:Death:CI} indicate decreasing inequality with increasing education.
In order to control for confounding we use both the plug-in and the one-step EIF-based estimator, the latter having the best performance in our simulation study. We observe that controlling for confounding decreases the value of the concentration index slightly for all education levels compared to the naive estimation. Moreover, as in the simulation study, the plug-in estimator shrinks all estimates towards zero. The contrasts estimated in Table \ref{tab:Death:contrasts} indicate that the decrease in inequality between the lowest education level and the middle and highest level is larger after controlling for confounding and that this decreased inequality is significant at the 5\% level.
When looking at the two alternative health variables (number of hospitalization days for all diagnoses and for CVD diagnoses only, see Tables \ref{tab:Hosp:CI}-\ref{tab:CVD:contrasts} in Supplementary Material Section \ref{hospital.sec}) we also observe health inequalities for hospitalization after correcting for confounding, although those inequalities are much smaller (yet significant at the 5\% level) for hosptialization due to CVD only.

\color{black}

\section{Conclusions}\label{conc.sec}
We have introduced the theory and method to study counterfactual concentration indexes. This is useful in observational studies of socioeconomic-related health inequalities, where one want to study the effect of exposures and interventions.  
In particular, we propose estimators which are regular asymptotic linear and locally efficient under certain conditions. We give their asymptotic normal distribution including the asymptotic variance which can be estimated thereby allowing for valid inference. The estimators proposed require the fit of several nuisance functions, and they have some rate robustness properties, where some of these functions can be fitted at a lower convergence rate than the parametric $\sqrt n$-rate. We run finite sample experiments illustrating the behavior of the estimators and their inference. 

An interesting observation is the difference in behaviour between the one-step estimator and the estimating equation estimator. While both are based on the efficient influence function, the latter does not fully correct for the bias of the plug-in estimator when nuisance models are misspecified. As we have noted, verifying the assumptions of Theorem \ref{remainder.thm} is not trivial for the estimating equation estimator, and we suspect that the observed difference in bias is due to the assumptions of Theorem \ref{remainder.thm} not holding in the misspecified situations studied in our numerical experiments. 

In a case study we have applied the methods proposed to study income-related health inequalities for different levels of education. The results indicate that such inequalities exist after correcting for confounding in the Swedish cohort born in 1950, and that inequality decreases for increasing education level.
An important caveat, as for all observational studies, is that there may still be omitted confounders, which bias these results. A future direction for research is to develop a valid sensitivity analysis to the ignorability Assumptions \ref{unconfoundedness.ass} and \ref{alternative2.ass}). For this purpose, one could try and generalize recent recent results by \citet{scharfstein2021semiparametric} or \cite{moosavi:2024} obtained in a semiparametric context for the classical average causal effect estimand.

%\baselineskip=14pt
%\bibliographystyle{agsm}
%\bibliographystyle{chicago} 
%\bibliography{catcon39}
%\begin{thebibliography}{}
	
%\end{thebibliography}

\newpage

%\nocite{spark,R, Vaart_1998,Demystifying}
\bibliographystyle{agsm}
\bibliography{bibliography.bib}
\setcounter{section}{0}

\def\spacingset#1{\renewcommand{\baselinestretch}%
{#1}\small\normalsize} \spacingset{1}

%%%%%%%%%%%%%%%%%%%%%%%%%%%%%%%%%%%%%%%%%%%%%%%%%%%%%%%%%%%%%%%%%%%%%%%%%%%%%%
\date{}
\if0\blind
{
  \title{\bf Causal inference targeting a concentration index for studies of health inequalities}
  \author{Mohammad Ghasempour\thanks{
    We are grateful for comments given by Tetiana Gorbach which have improved the paper. The Marianne and Marcus Wallenberg Foundation is acknowledged for their financial support.}\hspace{.2cm}\\
    Department of Statistics/USBE, Ume\r{a} University, Sweden \\
    and \\
    Xavier de Luna \\
    Department of Statistics/USBE, Ume\r{a} University, Sweden 
    \\
    and \\
    Per E. Gustafsson \\
    Department of Epidemiology and Global Health, Ume\r{a} University, Sweden}
  \maketitle
} \fi

\if1\blind
{
  \bigskip
  \bigskip
  \bigskip
  \begin{center}
    {\LARGE\bf Supplementary Material for manuscript: Causal inference targeting a concentration index for studies of health inequalities}
\end{center}
  \medskip
} \fi

%\bigskip
%\begin{abstract}
%A concentration index, a standardized covariance between a health variable and relative income ranks, is often used to quantify income-related health inequalities. There is a lack of formal approach to study the effect of an exposure, e.g., education, on such measures of inequality. In this paper we contribute by filling this gap and developing the necessary theory and method. Thus, we define a counterfactual concentration index for different levels of an exposure. We give conditions for their identification, and then deduce their efficient influence function. This allows us to propose estimators, which are regular asymptotic linear under certain conditions. In particular, these estimators are $\sqrt n$-consistent and asymptotically normal, as well as locally efficient. The implementation of the estimators is based on the fit of several nuisance functions. The estimators proposed have rate robustness properties allowing for convergence rates slower than $\sqrt{n}$-rate for some of the nuisance function fits.
%The relevance of the asymptotic results for finite samples is studied with simulation experiments. We also present a case study of the effect of education on income-related health inequalities for a Swedish cohort.
%\end{abstract}

%\noindent%
%{\it Keywords:}  Efficient influence function; Gini index; rate robustness; semiparametric efficiency bound; record linked data.
%\vfill

%\newpage
\spacingset{1.9} % DON'T change the spacing!
%\spacingset{1.2}%changed for thesis only; put back for submission
\npdecimalsign{.} 
\nprounddigits{3}
%\newpage

%\Appendix
%\section{Appendix}
\section{Proof of Theorem \ref{eif:thm}: Efficient influence functions}\label{proofthm1.app}
Consider the path $\mp_{s}=s\mathbbm{1}_{\tilde o}+(1-s)\mp$, where $s \in [0,1]$, and $\mathbbm{1}_{\tilde o}$ is the point mass distribution  \citep[e.g.,][]{Demystifying}. Then, for $j=1,2$, the efficient influence function for $\Psi_j$ can be derived as follows:
\begin{align*}
    \varphi_j(\tilde o,\Psi_j)=\left.\frac{d \Psi_j \left(\mp_{s}\right)}{d s}\right|_{s=0}=2\frac{\frac{d A_j}{d s}|_{s=0} \cdot {B}|_{s=0} - \frac{d B}{d s}|_{s=0} \cdot A_j|_{s=0}}{B^{2}|_{s=0}}\\
    :=2\frac{\ph{{A_j}}{\tilde{o}}{\mp}}{B(\mp)}-2\frac{\ph{B}{\tilde{o}}{\mp} \cdot A_j(\mp)}{B(\mp)^2}.
\end{align*}
Let us first deduce the influence function under Assumptions \ref{unconfoundedness.ass}, \ref{alternative2.ass}, \ref{overlap.ass}. %and \ref{rank.ass}:
By taking the Gâteaux derivative of our estimator at a distribution in the defined path, with respect to the path parameter $s$, we have:

\begin{align*}
        \ph{{A_2}}{\tilde{o}}{\mp}=&\left.\frac{d A_2}{d s}\right|_{s=0}=\left.\frac{d}{d s}\left(\int y \Xi_{e}(i,\mp_s) \frac{f_{s}(y, i, x, E=e)}{f_{s}(x,E=e )}  {f_{s}(x)} d y d i d x\right) \right|_{s=0}\\
        =&\left.\int y \Xi_{e}(i,\mp_s) \left[ \frac{\frac{df_s (y,i,x,E=e)}{ds}  f_s(x) + f_s(y,i, x,E=e) \frac{df_s(x)}{ds}}{ f_s(x,E=e) } \right. \right. \\
        &-\left. \left.\frac{f_{s}(y, i, x,E=e) f_{s}(x)\left[\frac{d f_{s}( x,E=e)}{d s}\right]}{f_{s}(x,E=e)^{2} } \right] dy di dx \right|_{s=0}\\
        &+\left.\int y \frac{\frac{d \Xi_{e}(i,\mp_s)}{d s} f_{s}(y, i,x,E=e) f_{s}(x)}{f_{s}(x,E=e) } dydidx\right|_{s=0}.
\end{align*}  
Evaluating the above derivative at the point $s=0$, yields to:
\begin{align*}        
        &\int y \frac{\Xi_{e}(i,\mp) f(y, i, x,E=e) f(x)}{f( x,E=e) }\\
        &\cdot\left[\frac{\mathbbm{1}(y, i, x,E=e)}{f(y, i,  x,E=e)}-1+\frac{\mathbbm{1}(x)}{f(x)}-1 - \frac{\mathbbm{1}(E=e, x)}{f( x,E=e)}+1 \right] d y d i d x\\
         &+ \EE_\mp ( \EE_\mp \{ Y  \frac{d \Xi_{e}(I,\mp_s)}{d s} \mid  X,E=e\}) \mid_{s=0}.
\end{align*}        
The second term above is the sensitivity of $\Xi_{e}(i,\mp_s)$. Recall,
\begin{align*}
    \Xi_{e}(i,\mp_s)=&\EE_{\mp_s}\{\EE_{\mp_s}(\mathbbm{1}(I\leq i) \mid X, E=e) \}.
    \end{align*}
    Then,
\begin{align*}
    \left.\frac{d  \Xi_{e}(j,\mp_s)}{d s} \right|_{s=0}  =&  \int \mathbbm{1}(i\leq j) \frac{ f(i, E=e, x) f(x)}{f(E=e, x) }\left[\frac{\mathbbm{1}(i, E=e, x)}{f(i, E=e, x)}-1+\frac{\mathbbm{1}(x)}{f(x)}-1 - \frac{\mathbbm{1}(E=e, x)}{f(E=e ,x)}+1 \right] d i d x \\
    =& \frac{\mathbbm{1}(E=e)}{\pi_e(\tilde{x})} [ \mathbbm{1}(\tilde{i} \leq j) - \EE_\mp(\mathbbm{1}(I \leq j) \mid  \tilde{x}, E=e) ] + \EE_\mp(\mathbbm{1}(I \leq j) \mid \tilde{x},E=e) - \Xi_{e}(j,\mp)\\
    =:&\varphi_\Xi (j,\tilde o,\mp).
\end{align*}
Substituting the latter into $\left.\frac{d A_2(\mp_s)}{ds}\right|_{s=0}$:

\begin{align*}
\ph{{A_2}}{\tilde{o}}{\mp} = &
\frac{\mathbbm{1}(\tilde e=e)}{\pi_e( \tilde{x})} [\tilde{y} \Xi_{e}(\tilde{i},\mp)  -  \EE_\mp(Y  \Xi_{e}(I,\mp) \mid  \tilde{x},E=e)]  +\EE_\mp(Y \ \Xi_{e}(I,\mp) \mid  \tilde{x},E=e) -  A_2(\mp) \\
 &+ \EE [ \EE \{  Y \varphi_\Xi (I,\tilde o,\mp) \mid   X,E=e \} ] .
\end{align*}

Term $B$ is typical when the average causal effect is the parameter of interest and its derivative has been deduced elsewhere \citep[e.g.,][]{Demystifying}:
\begin{align*}
    \varphi_B(\tilde{o},{\mp})=\left.\frac{d B}{d s}\right|_{s=0}=\frac{\mathbbm{1}(T = e) }{\pi_e( \tilde{x})}\left(\tilde{y}-\EE_\mp\left(Y \mid \tilde{x}, E=e \right)\right)+ \EE_\mp \left(Y |  \tilde{x},E=e\right) - B(\mp),
\end{align*}
thereby the results for $\varphi_2(\tilde o,\Psi_2)$.
%Substituting we obtain the influence function:
%\begin{align*}
%\varphi(A)=\frac{\mathbbm{1}(E=e)}{\pi(e \mid \tilde{x})} [\tilde{y} F(\tilde{i})  -  \EE(Y \cdot F(i) \mid E=e, \tilde{x})]  +\EE(Y \cdot F(i) \mid E=e, \tilde{x}) \\
%+ \EE \left[ E \left\{  Y \cdot \left(  \frac{\mathbbm{1}(E=e)}{\pi(e\mid \tilde{x})} [ \mathbbm{1}(\tilde{I} \leq i) - \EE(\mathbbm{1}(I \leq i) \mid e, \tilde{x}) ] + \EE(\mathbbm{1}(I \leq i) \mid e, \tilde{x})   \right) \mid  E=e, X \right\} \right] - 2 A(\mp),\\
%\varphi(B)=\frac{\mathbbm{1}(T = e) }{\pi(e | \tilde{x})}\left(\tilde{y}-E\left(Y \mid E=e, \tilde{x} \right)\right)+ E\left(Y | E=e, \tilde{x}\right) - B(\mp) 
%\end{align*}

Finally, under Assumption \ref{alternative1.ass}, i.e. assuming $Y(e)$ and $I(e)$ independent given $X$ lead to a separation of the inner conditional expectations:
\begin{align*}
\mathbb{E}_\mp(Y  \mathbbm{1}{\{I \leq j\}} \mid X, E=e) =
\mathbb{E}_\mp(Y \mid  X,E=e)  \mathbb{E}_\mp(\mathbbm{1}{\{I \leq j\}} |  X,E=e).
%=M_e(X)  N_e(j,X)
\end{align*}
Similar calculations to the above ones yield $\ph{1}{\tilde{o}}{\mp}$.

\section{Proof of Theorem \ref{remainder.thm}: Controlling the remainder term}\label{proofthm2.app}

    We can write:

%\begin{align*}
%    &- \sqrt{n} \E_\mp \left\{ \varphi_1 (O, \hat{\mp}_n) + \Psi_1(\hat{\mp}_n) - \Psi_1(\mp) \right\} = \\
%    &- \sqrt{n} \E_\mp \Bigg[  \frac{\ifAan}{\Bhat} +  \\
%    & \frac{\ifAbn}{\Bhat} +  \\
%    & \frac{\ifAcn}{\Bhat} + \frac{\Ahatn}{\Bhat} - \frac{\Ahat}{\Bhat} - \\ 
%    & \frac{\Ahat}{\left[\Bhat\right]^2} \times \left\{\ifBn\right\}  \Bigg] ,
%\end{align*}
\begin{align*}
\sqrt{n} & \left( \Psi_j(\pn ) + \E_\mp \left[ \varphi_j (O, \hat{\mp}_n) \right] - \Psi_j( \mp ) \right) =- \sqrt{n} \e{\mp}{\ph{j}{O}{\pn} + \ps{j}{\pn} - \ps{j}{\mp}} \\
    =&- \sqrt{n} \e{\mp}{ \frac{B(\pn) \ph{A}{O}{\pn}  - A_j(\pn) \ph{B}{O}{\pn} }{ B(\pn)^2 } + \frac{A_j(\pn)}{B(\pn)}- \frac{A_j(\mp)}{B(\mp)} }  \\
    =&- \sqrt{n} \left\{ \e{\mp}{ \frac{ \ph{A}{O}{\pn} + A_j(\pn) - A_j(\mp) }{ B(\pn) } }    - \e{\mp}{ \frac{A_j(\pn) ( \ph{B}{O}{\pn} + B(\pn) - B(\mp)) }{ B(\pn)^2 } }  \right. \\ 
    & + \left.  \frac{A_j(\pn)}{B(\pn)} -  \frac{A_j(\mp)}{B(\mp)}  + \frac{A_j(\mp) - A_j(\pn)}{B(\pn)} +  \frac{A_j(\pn) (B(\pn) - B(\mp)) }{B(\pn)^2}  \right\} \\
    =&- \sqrt{n} \left[ \frac{1}{B(\pn)} R_A(\mp,\pn) - \frac{A_j(\pn)}{B(\pn)^2} R_B(\mp,\pn) + \frac{B(\mp) -B(\pn)}{B(\pn)} \left( 
 \frac{A_j(\mp)}{B(\mp)} -  \frac{A_j(\pn)}{B(\pn)} \right) \right] \\
    %&- \sqrt{n} \left\{ \e{\mp}{ \frac{\ph{A}{O}{\pn} + A_j(\pn) }{B(\pn)} - \frac{A_j(\mp)}{B(\mp)} } + \e{\mp}{\frac{A_j(\pn)}{B(\pn)^2} \ph{B}{O}{\pn}}  \right\} \\
=&- \sqrt{n} \left[ \frac{1}{B(\pn)} R_A(\mp,\pn) - \frac{A_j(\pn)}{B(\pn)^2} R_B(\mp,\pn) \right]+  o_\mp(1),
\end{align*}
where we have let 
$R_A(\mp,\pn)=A_j(\pn)-A_j(\mp)+\EE_\mp(\ph{{A_j}}{O}{\pn})$ and $R_B(\mp,\pn)=B_j(\pn)-B_j(\mp)+\EE_\mp(\ph{{B_j}}{O}{\pn})$. The last equality holds by Assumption \ref{product.ass}i).

%which is equal 
%\begin{align*}
%    & \ifAbn -  \\
%    &  \Ahatn - \Ahat  \\ 
%    & \frac{\Ahat}{\Bhat} \times \left\{\ifBn\right\}  \Bigg] \times \frac{1}{{\Bhat}} ,
%\end{align*}
%using the known model for the health outcome ensures the asymptotic vanishing of the last term above.

Now because of Assumption \ref{product.ass}ii), it remains to show that $R_A(\mp,\pn)$ and $R_B(\mp,\pn)$ are $o_\mp(\sqrt{n}^{-1})$ to complete the proof. Note that these are the two remainder term of a von Mises expansion \eqref{vonM.eq} for the parameters $A_j(\mp)$ and $B(\mp)$ respectively.

For $B(\mp)$, this term has been widely studied, and we know that $R_B(\mp,\pn)=o_\mp(\sqrt{n}^{-1})$ under Assumption \ref{product.ass}iii); for more details we refer to Section 4.4 of  \citet{Demystifying}.

We know consider first the case $A_1(\mp)$. We can write:

\begin{align*}
R_A&(\mp,\pn)= \e{\mp}{\ph{{A_1}}{O}{\pn} + A_1(\pn) - A_1(\mp)} \\
=\ & \EE_\mp\left[ \frac{\ie}{\hat\pi_e( X)} \left[Y \Xi_{e}(I,\pn)  -  \hat M_e(X)\hat\EE_{\pn (\tilde I\mid X)}( \Xi_{e}(\tilde I,\pn) \mid  X,E=e)\right]  \right]\\
+\ & \EE_{\mp}\left[\hat M_e(X)\hat\EE_{\pn(\tilde I\mid X)}( \Xi_{e}(\tilde I,\pn) \mid  X,E=e)-A_1(\pn)  \right ] \\
+\ & \EE_\mp\left[\hat\EE_{\pn (\tilde X)} \left[\hat M_e(\tilde X) \hat\EE_{\pn(\tilde I\mid \tilde X)}\left[ \varphi_\Xi(\tilde I, O,\pn)\mid \tilde X, E=e  \right] \right] \right] \\
  +\ & \e{\mp}{ A_1(\pn)- A_1(\mp)  }.\\
\end{align*}

Note that the tilde notation above is necessary to introduce in order to keep track for what random variables the expectations are taken over, especially for the third term after the equality sign, noting that $\EE_\mp$ is taken only over $O$.

After some algebraic manipulation we get
\begin{align*}
R_A&(\mp,\pn) \\
%=&- \sqrt{n} \e{\mp}{ \ifAan  } \\
%&- \sqrt{n} \e{\mp}{ \ifAbn  } \\
%&- \sqrt{n} \e{\mp}{ -A_1(\pn) - A_1(\mp)  } \\
%=&- \sqrt{n} \e{\mp}{  \e{\pn}{ \hat{M}_e(X) \cdot \mathbb{E}_{\hat{\mathcal P}_n}\left[ \left. \left( \frac{\mathbbm{1}_e(\Tilde{e})}{\pi_e(\Tilde{x})} \{ \mathbbm{1}\{ \Tilde{j} \leq I \}- \hat{N}_e(I, \Tilde{x}) \} + \hat{N}_e(I, \Tilde{x}) - N_e(I, \Tilde{x}) \right) \right| X, E=e \right]  }  }\\
%&- \sqrt{n} \e{\mp}{\frac{\mathbbm{1}_e(\Tilde{e})}{\pi_e(\Tilde{x})} \left\{ \Tilde{y} \hat{F}(\Tilde{i})- \hat{M}_e(\Tilde{x}) \mathbb{E}_{\hat{\mathcal P}_n}(\hat{F}(I)|X= \Tilde{x},E=e) \right\}  }\\ 
%&- \sqrt{n} \e{\mp}{\hat{M}_e(\Tilde{x})\mathbb{E}_{\hat{\mathcal P}_n}(\hat{F}(I)|X= \Tilde{x},E=e)  - M_e(\Tilde{x})\mathbb{E}_{\mp}(\hat{F}(I)|X= \Tilde{x},E=e) } \\
%&- \sqrt{n} \e{\mp}{ - A_1(\pn)  - A_1(\mp)  + M_e(\Tilde{x})\mathbb{E}_{\mp}(\hat{F}(I)|X= \Tilde{x},E=e) + \e{\pn}{ \hat{M}_e(X) \mathbb{E}_{\hat{\mathcal P}_n}\left[ \left.  N_e(I, \Tilde{x})  \right| X, E=e \right]  } }  \\
=\ & \e{\mp}{ \hat\EE_{\pn(\tilde X)} \left[ \hat{M}_e(\tilde X) \left( \frac{\pi_e(X)}{\hat{\pi}_e(X)}  -1\right) \hat{\mathbb{E}}_{\pn(\tilde I \mid \tilde X)}\left[ \left. N_e(\tilde I, X)- \hat{N}_e(\tilde  I, X) \right| \tilde X, E=e \right]  \right]  }\\
+\ & \e{\mp}{ \left( \frac{\pi_e(X)}{\hat{\pi}_e(X)} -1 \right) \left( M_e(X)\mathbb{E}_{\mp }(\Xi_e(I,\pn)|X,E=e)- \hat{M}_e(X) \mathbb{E}_{\hat{\mathcal P}_n (\tilde I \mid X)}(\Xi_e( \tilde I,\pn)|X,E=e) \right)   }\\
+\ & \e{\mp}{ M_e(X)  \e{\mp}{ \Xi_e(I,\pn) - \Xi_e(I,\mp) | X , E=e }}\\
+\ & \e{\mp}{ \hat\EE_{\pn(\tilde X)}{\left[ \hat{M}_e(\tilde X) \hat\EE_{\pn( \tilde I \mid \tilde X)}{\left[ N_e(\tilde I,X) - \Xi_e(\tilde I,\pn) |\tilde X, E=e \right]} \right]} }
\end{align*}
\begin{comment}

Let's fix the notations $\Lambda (X,\pn ) = \hat{M}_e( X) \hat\EE_{\pn( \tilde I \mid X)}{\left[ F_e(\tilde I) | X, E=e \right]}$, and $\Lambda (X,\mp ) = M_e(X)  \e{\mp(I \mid X)}{  F_e(I_i(e)=i) | X , E=e }$.
By adding and subtracting $\e{\mp}{\Lambda (X,\pn )}$, and $\e{\pn}{\Lambda (X,\mp )}$, 

\end{comment}
Note that to get the second term we use Assumption \ref{product.ass}viii). It is an immediate result of Assumption \ref{product.ass}iv) and v) that the first two terms are $o_\mp(n^{-1/2})$. The last two terms in the equation above can be re-written as:
\begin{comment}
\begin{align*}
     (\mathbb{E}_{\mp}(\mathbb{E}_{\mp}(\cdot | X, E=e ) - \mathbb{E}_{\pn}( \mathbb{E}_{\pn} (\cdot | X, E=e )))) ,
 %& \e{\mp}{ M_e(X)  \e{\mp(I \mid X)}{ \Xi_e(I,\pn) - \Xi_e(I,\mp) | X , E=e } + \Lambda (X,\pn )  - \Lambda (X,\pn ) }\\
%+\ & \hat\EE_{\pn(\tilde X)}{\left[ \hat{M}_e(\tilde X) \hat\EE_{\pn( \tilde I \mid \tilde X)}{\left[ \Xi_e(\tilde I,\mp) - \Xi_e(\tilde I,\pn) |\tilde X, E=e \right]} + \Lambda (X,\mp ) - \Lambda (X,\mp ) \right]} 
\end{align*}
\end{comment}
%Rearranging these eight terms gives us:
\begin{align*}
 %    & \EE_{\mp} - \EE_{\pn} \left[  \Lambda (X,\pn ) -  \Lambda (X,\mp )    \right] \\
 &  \e{\mp}{ M_e(X)  \e{\mp}{ \Xi_e(I,\pn) | X , E=e }  }  - \e{\mp}{ M_e(X)  \e{\mp}{ \Xi_e(I,\mp) | X , E=e }  } \\
 +\ & \EE_{\mp}{\left[  \hat{M}_e(\tilde X) \hat\EE_{\pn( \tilde I \mid \tilde X)}{\left[ \Xi_e(\tilde I,\mp)  |\tilde X, E=e \right]}  \right]} - \EE_{\mp}{\left[  \hat{M}_e(\tilde X) \hat\EE_{\pn( \tilde I \mid \tilde X)}{\left[ \Xi_e(\tilde I,\pn)  |\tilde X, E=e \right]}  \right]}\\
 +\ & \EE_{\mp}{\left[  \hat{M}_e(\tilde X) \hat\EE_{\pn( \tilde I \mid \tilde X)}{\left[ \Xi_e(\tilde I,\pn)  |\tilde X, E=e \right]}  \right]} - \EE_{\mp}{\left[  \hat{M}_e(\tilde X) \hat\EE_{\pn( \tilde I \mid \tilde X)}{\left[ \Xi_e(\tilde I,\mp)  |\tilde X, E=e \right]}  \right]}\\
+\ & \hat\EE_{\pn(\tilde X)}{\left[  \hat{M}_e(\tilde X) \hat\EE_{\pn( \tilde I \mid \tilde X)}{\left[ \Xi_e(\tilde I,\mp)  |\tilde X, E=e \right]}  \right]} - \hat\EE_{\pn(\tilde X)}{\left[  \hat{M}_e(\tilde X) \hat\EE_{\pn( \tilde I \mid \tilde X)}{\left[ \Xi_e(\tilde I,\pn)  |\tilde X, E=e \right]}  \right]} 
\end{align*}
%a function of $( Y\Xi_e(I,\pn) - Y\Xi(I,\mp)  )$. 
The last two lines are an empirical process that vanishes asymptotically under Assumption \ref{product.ass}x) \citep[][Lemma 19.24]{Vaart_1998}.
To deal with the the first two lines, we rewrite them by first defining a linear functional:
$$\Lambda(\Xi_e(\cdot,\pn) - \Xi_e(\cdot,\mp) ),$$
where
$$\Lambda(h ) : = \e{\mp}{ M_e(X)  \e{\mp}{ h(I) | X , E=e }  -  \hat{M}_e( X) \hat\EE_{\pn( \tilde I \mid  X)}{\left[ h(\tilde I)  | X, E=e \right] }   } .$$
Using the Riesz representation theorem, there exists a function $\lambda(O)$ such that:
$$ \Lambda(h ) = \langle \lambda \,,h\rangle  = \EE{[\lambda(O)h(O)]}, \forall h,$$
and 
$$||\Lambda||_{sup} = ||\lambda|| .$$
Therefore, we have 
$$\Lambda(\Xi_e(\cdot,\pn) - \Xi_e(\cdot,\mp) ) = \EE{[\lambda(O)[\Xi_e(I,\pn) - \Xi_e(I,\mp)]]} \leq \EE{[(\Xi_e(I,\pn) - \Xi_e(I,\mp))^2]^\frac{1}{2}} ||\Lambda||_{sup}^\frac{1}{2},$$
where in the last inequality we have used Cauchy-Schwarz inequality. The right-hand-side of the inequality is $o_\mp(n^{-1/2})$ by Assumption ix).

We now consider the remainder term for $A_2(\mp)$:

\begin{align*}
R_A&(\mp,\pn)= \e{\mp}{\ph{{A_2}}{O}{\pn} + A_2(\pn) - A_2(\mp)} \\
=\ & \EE_\mp\left[ \frac{\ie}{\hat\pi_e( X)} \left[Y \Xi_{e}(I,\pn)  -  \hat\EE_{\pn(\tilde Y, \tilde I\mid X)}(\tilde Y \Xi_{e}(\tilde I,\pn) \mid  X,E=e)\right]  \right]\\
+\ & \EE_{\mp}\left[\hat\EE_{\pn(\tilde Y, \tilde I\mid X)}(\tilde Y  \Xi_{e}(\tilde I,\pn) \mid  X,E=e)-A_2(\pn)  \right ] \\
+\ & \EE_\mp\left[\hat\EE_{\pn(\tilde X)} \left[ \hat\EE_{\pn(\tilde Y, \tilde I\mid \tilde X)}\left[\tilde Y \varphi_\Xi(\tilde I, O,\pn)\mid \tilde X, E=e  \right] \right] \right] \\
  +\ & \e{\mp}{ A_2(\pn)- A_2(\mp)  }.\\
\end{align*}
By considering a similar procedure to the one for $A_1(\mp)$ we have:
\begin{align*}
R_A&(\mp,\pn) \\
%=&- \sqrt{n} \e{\mp}{ \ifAan  } \\
%&- \sqrt{n} \e{\mp}{ \ifAbn  } \\
%&- \sqrt{n} \e{\mp}{ -A_1(\pn) - A_1(\mp)  } \\
%=&- \sqrt{n} \e{\mp}{  \e{\pn}{ \hat{M}_e(X) \cdot \mathbb{E}_{\hat{\mathcal P}_n}\left[ \left. \left( \frac{\mathbbm{1}_e(\Tilde{e})}{\pi_e(\Tilde{x})} \{ \mathbbm{1}\{ \Tilde{j} \leq I \}- \hat{N}_e(I, \Tilde{x}) \} + \hat{N}_e(I, \Tilde{x}) - N_e(I, \Tilde{x}) \right) \right| X, E=e \right]  }  }\\
%&- \sqrt{n} \e{\mp}{\frac{\mathbbm{1}_e(\Tilde{e})}{\pi_e(\Tilde{x})} \left\{ \Tilde{y} \hat{F}(\Tilde{i})- \hat{M}_e(\Tilde{x}) \mathbb{E}_{\hat{\mathcal P}_n}(\hat{F}(I)|X= \Tilde{x},E=e) \right\}  }\\ 
%&- \sqrt{n} \e{\mp}{\hat{M}_e(\Tilde{x})\mathbb{E}_{\hat{\mathcal P}_n}(\hat{F}(I)|X= \Tilde{x},E=e)  - M_e(\Tilde{x})\mathbb{E}_{\mp}(\hat{F}(I)|X= \Tilde{x},E=e) } \\
%&- \sqrt{n} \e{\mp}{ - A_1(\pn)  - A_1(\mp)  + M_e(\Tilde{x})\mathbb{E}_{\mp}(\hat{F}(I)|X= \Tilde{x},E=e) + \e{\pn}{ \hat{M}_e(X) \mathbb{E}_{\hat{\mathcal P}_n}\left[ \left.  N_e(I, \Tilde{x})  \right| X, E=e \right]  } }  \\
=\ & \e{\mp}{ \hat\EE_{\pn(\tilde X)}{\left[  \left( \frac{\pi_e(X)}{\hat{\pi}_e(X)}  -1\right) \hat{\mathbb{E}}_{\pn(\tilde Y ,\tilde I \mid \tilde X)}\left[ \tilde Y (\left. N_e(\tilde I, X)- \hat{N}_e(\tilde  I, X) )\right| \tilde X, E=e \right]  \right]}  }\\
+\ & \e{\mp}{ \left( \frac{\pi_e(X)}{\hat{\pi}_e(X)} -1 \right) \left( \mathbb{E}_{\mp }(Y \Xi_e(I,\pn)|X,E=e)-  \mathbb{E}_{\hat{\mathcal P}_n (\tilde Y, \tilde I \mid X)}(\tilde Y \Xi_e( \tilde I,\pn)|X,E=e) \right)   }\\
+\ &  \e{\mp}{ \e{\mp}{Y \Xi_e(I,\pn) | X , E=e }  }  - \e{\mp}{   \e{\mp}{ Y \Xi_e(I,\mp) | X , E=e }  } \\
 +\ & \EE_{\mp}{\left[  \hat\EE_{\pn(\tilde Y, \tilde I \mid \tilde X)}{\left[ \tilde Y \Xi_e(\tilde I,\mp)  |\tilde X, E=e \right]}  \right]} - \EE_{\mp}{\left[  \hat\EE_{\pn(\tilde Y,  \tilde I \mid \tilde X)}{\left[\tilde Y \Xi_e(\tilde I,\pn)  |\tilde X, E=e \right]}  \right]}\\
 +\ & \left\{ \EE_{\mp}-\hat\EE_{\pn(\tilde X)}\right\}{\left[ \hat\EE_{\pn( \tilde Y, \tilde I \mid \tilde X)}{\left[\tilde Y \Xi_e(\tilde I,\pn)  |\tilde X, E=e \right]} -  \hat\EE_{\pn(\tilde Y, \tilde I \mid \tilde X)}{\left[ \tilde Y \Xi_e(\tilde I,\mp)  |\tilde X, E=e \right]}  \right]}. 
\end{align*}

As earlier, the last term in the equation above is an empirical process that vanishes asymptotically under Assumption \ref{product.ass}xii).
It is an immediate result of Assumption \ref{product.ass}vi) and vii) that the first two terms are $o_\mp(n^{-1/2})$.
Using Assumption \ref{product.ass}xi) and the Riesz representation theorem in the same way as in the earlier case ensures that the second and third terms in the equation above is also $o_\mp(n^{-1/2})$.

\section{Estimation}\label{estimation.app}

We give here details on three estimation strategies introduced in the paper, for the parameters $\Psi_1$ and $\Psi_2$, i.e. the plug-in estimator, the one-step estimator \eqref{1s.eq} and the estimating equation estimator \eqref{eq.eq}.

\subsection{Plug-in estimator}
The plug-in estimator is $\Psi_j(\pn)=2\frac{A_j(\pn)}{B(\pn)}-1$. Term $B$ is a usual potential outcome mean. Its plug-in estimation is obtained by a model driven or data-adaptive fit $\hat M_e(X)$, and averaging over the predicted outcomes for the whole sample:
\begin{align*}
B(\pn)= \mathbb{E}_n[\hat M_e(X)],
%  B(\pn)= \mathbb{E}_n[\mathbb{E}_{\pn}(Y \mid E=e, X)], 
%\frac{1}{n} \sum_{i=1}^n \hat{M}_e (x_i)  
\end{align*}
where $\EE_n V=\frac{1}{n} \sum_{i=1}^nV_i$.

The terms $A_j$s are not trivial to estimate. 
We give here details for $A_1$.
\begin{align*}
%    A_1(\pn) = \EE_n[\EE_{\pn}(Y \mid E=e, X)  \EE_{\pn}(\Xi_{e}(I,\pn) \mid E=e, X)].
A_1(\pn) = \EE_n[\hat M_e(X)  \hat\EE(\Xi_{e}(I,\pn) \mid  X,E=e)].
\end{align*}
We need a fit of $\EE(\Xi_{e}(I,\pn) \mid X, E=e)$. 
%Unfortunately, it is not possible to have a two step procedure, where $\Xi_{e}(i,\mp)$ is estimated first, because the latter is not possible to fit at any value $i$.
%On the other hand, we can write
One possibility is here to rewrite
\begin{align}\label{eq:lastmodel}
 \hat\EE_{\pn}[\Xi_{e}(I,\pn) \mid X,E=e] =&\hat\EE_{\pn(I\mid X)}\left[ \hat\EE_{\pn(\tilde X)} \left( \hat\EE_{\pn(\tilde I\mid \tilde X)} (\mathbbm{1}(\tilde I\leq I)\mid \tilde X, \tilde E=e) 
 \right) \Bigg | X, E=e  \right] \nonumber \\
    =&\hat\EE_{\pn(\tilde X)} \left[ \hat\EE_{\pn(I\mid X)}\left \{\hat\EE_{\pn(\tilde I\mid \tilde X)} (\mathbbm{1}(\tilde I\leq I)\mid \tilde X, \tilde E=e)\mid X, E=e \right \}\right].
%    \EE_{\pn}[\Xi_{e}(I,\pn) \mid X,E=e] =&\EE_{i\mid X,E=e}\left[\frac{1}{n} \sum_{j=1}^n \EE_{\tilde\mp_{n}} (\mathbbm{1}(\tilde I\leq I)\mid \tilde X_j, \tilde E=e)\Bigg | X, E=e \right] \\
%    =&\frac{1}{n} \sum_{j=1}^n \EE_{\pn}\left[\EE_{\tilde\mp_{n}} (\mathbbm{1}(\tilde I\leq I)\mid \tilde X_j, \tilde E=e)\mid X, E=e \right].
\end{align}
We propose to fit the last term with the following algorithm:
%$$\frac{1}{n}\sum_{j=1}^n \widehat{\EE_{I\mid X,E=e}\left[\EE_{\tilde I\mid \tilde X,\tilde E=e} (\mathbbm{1}(\tilde I\leq I)\mid \tilde X_j, \tilde E=e)\mid X, E=e \right]},$$
\begin{enumerate}
\item Comparing all pairs of observed incomes in subsample with $E=e$ yields a $n_e\times n_e$ matrix  (where $n_e$ individuals have treatment level $e$). For all entry in the matrix (all pairs of income $(i,\tilde i)$) regress $\mathbbm{1}(\tilde i\leq i)$ on $x_i,x_{\tilde i}$, using probit link. Note this is a correctly specified model for the DGP of the simulation study.

\item Use the model fitted in 1. to predict $\mathbbm{1}(\tilde i\leq i)$ for all pairs of individuals in the sample, yielding a $n \times n$ matrix of fitted values.

\item Take the average of the row in this matrix for which $X=x_i$, to obain an implementation of the estimator (\ref{eq:lastmodel}); $\EE_{\pn}[\Xi_{e}(I,\pn) \mid X =x_i,E=e]$.

\end{enumerate}

\subsection{One-step estimator}

The one-step estimator \eqref{1s.eq} of $G(e)$ is the sum of the plug-in estimator and the average of estimated influence function over the data:
\begin{equation*}
\begin{array}{l}
\Psi_j(\pn) + E_n[\ph{j}{O}{\pn}] = 2\frac{A(\pn)}{B(\pn)}\left( \frac{E_n[\ph{A}{O}{\pn}]}{A(\pn)}-\frac{E_n[\ph{B}{O}{\pn}]}{B(\pn)} +1 \right)-1 \\
= 2\frac{A(\pn)}{B(\pn)}\left( \frac{E_n[\ph{A}{O}{\pn}] + 2A(\pn)}{A(\pn)}-\frac{E_n[\ph{B}{O}{\pn}] + B(\pn)}{B(\pn)}  \right)-1
\end{array}
\end{equation*}
%Using such a one-step estimator leads us to the following estimator for the concentration index:
%\begin{equation}
%\begin{array}{l}
%\hat{G}_{\text{one-step}}(e) := 2\frac{A(\pn)}{B(\pn)}( \frac{E_n[\ph{A}{O}{\pn}] + 2A(\pn)}{A(\pn)}-\frac{E_n[\ph{B}{O}{\pn}] + B(\pn)}{B(\pn)}  ) -1
%\end{array}
%\end{equation}

Under Assumption \ref{alternative1.ass} (i.e., for $\Psi_1$), five nuisance functions need to be fitted to obtain $\ph{A}{O}{\pn}$ (see Theorem \ref{eif:thm}). Those include the potential health outcome model $M_e(X)$, the score $\pi_e(X)$, the counterfactual ranks of incomes
%=E_{n}\{\hat{\EE}_{\tilde I|X}(\mathbbm{1}(\tilde I\leq I) \mid X, E=e ) \}
$\Xi_{e}(I,\mp)$, as well as:
$$\EE_\mp( \Xi_{e}(I,\mp) \mid  X,E=e) $$
and 
$$ 
\EE_{\mp(X)} [ \EE_{\mp(I \mid X)} \{ \EE_{\mp(\tilde I \mid \tilde X)}  ( \mathbbm{1}(\tilde I\leq I)\mid \tilde X, \tilde E=e ) \mid  X, E=e  \} ].
$$

The last two items can be fitted with the same method as in the previous section, i.e. by regressing $\mathbbm{1}(i\leq \tilde i)$ on $x_i,x_{\tilde i}$, while $\Xi_{e}(i,\pn)$ can be fitted with the following algorithm for a given income $i$:

\begin{enumerate}
    \item Start by regressing $\mathbbm{1}( \tilde I\leq i)$ on $\tilde{X}$
for $(\tilde{X},\tilde{I}) \in \{(x_j,i_j), j=1,\dots,n_e\}$ (subsample with E=e), with, e.g., a logit link.
    \item Use the fitted model to predict the values: $\EE(\mathbbm{1}( \tilde I\leq i) \mid \tilde{X})$ for the whole sample $\{(x_j,i_j), j=1,\dots,n\}$.
    \item Take the average $\EE_n\{\hat\EE_{\pn}(\mathbbm{1}( \tilde I\leq i) \mid \tilde{X})\}$, to obtain the estimation of $\Xi_{e}(i,\pn)$. 
\end{enumerate}
The terms $A(\pn)$ and $B(\pn)$ are the plug-in estimators described in the previous section. Finally, $\ph{B}{O}{\pn}$ is estimated using $\hat{\pi}_e( X)$ and $\hat{M}_e(X)$.

\begin{comment}
\begin{align*}
     \varphi_1(\tilde o,\Psi_j)&= 2\frac{\ph{{A_j}}{\tilde{o}}{\mp}}{B(\mp)}-2\frac{\ph{B}{\tilde{o}}{\mp} A_j(\mp)}{B(\mp)^2},
\end{align*}
   where
  \begin{align*}
  \ph{{A_1}}{\tilde{o}}{\mp} = & \frac{\mathbbm{1}(E=e)}{\pi_e( \tilde{x})} [\tilde{y} \Xi_{e}(\tilde{i},\mp)  -  \EE(Y \mid  \tilde{x},E=e)\EE( \Xi_{e}(I,\mp) \mid  \tilde{x},E=e)]  \\
  &+\EE(Y \mid  \tilde{x},E=e)\EE( \Xi_{e}(I,\mp) \mid  \tilde{x},E=e) -  A_1(\mp)\\
  &+ \EE [ \EE \{  Y\mid   X,E=e \} \EE\{\varphi_\Xi (I,\tilde o,\mp) \mid   X,E=e \} ], \\
\ph{{A_2}}{\tilde{o}}{\mp} = &
\frac{\mathbbm{1}(E=e)}{\pi_e( \tilde{x})} [\tilde{y} \Xi_{e}(\tilde{i},\mp)  -  \EE(Y  \Xi_{e}(I,\mp) \mid  \tilde{x},E=e)]  +\EE(Y \ \Xi_{e}(I,\mp) \mid  \tilde{x},E=e) -  A_2(\mp) \\
 &+ \EE [ \EE \{  Y \varphi_\Xi (I,\tilde o,\mp) \mid   X,E=e \} ], \\
    \varphi_B(\tilde{o},{\mp})=&\frac{\mathbbm{1}(T = e) }{\pi_e( \tilde{x})}\left(\tilde{y}-\EE\left(Y \mid \tilde{x}, E=e \right)\right)+ \EE\left(Y |  \tilde{x},E=e\right) - B(\mp),
\end{align*}  
with 
\begin{align*}
    \varphi_\Xi (j,\tilde o,\mp)=\frac{\mathbbm{1}(E=e)}{\pi_e(\tilde{x})} [ \mathbbm{1}(\tilde{i} \leq j) - \EE(\mathbbm{1}(I \leq j) \mid  \tilde{x}, E=e) ] + \EE(\mathbbm{1}(I \leq j) \mid \tilde{x},E=e) - \Xi_{e}(j,\mp).
\end{align*}
\end{comment}

\subsection{Estimating equation estimator}

The estimating equation estimator solves \eqref{eq.eq}. We can write:
\begin{equation*}
%\begin{array}{l}
\EE_n[\ph{j}{O}{\pn}] = 0 \Rightarrow \frac{A(\pn)}{B(\pn)}\left( \frac{E_n[\ph{A}{O}{\pn}]}{A(\pn)}-\frac{E_n[\ph{B}{O}{\pn}]}{B(\pn)}\right) = 0.
%\end{array}
\end{equation*}
Thus for the uncentered influence function, the equation is:
\begin{equation*}
\begin{array}{l}
\EE_n[\ph{j}{O}{\pn}] + 2\frac{A(\pn) }{B(\pn)} = 2\frac{A(\pn) }{B(\pn)} \\
\Rightarrow \frac{A(\pn)}{B(\pn)}\left( \frac{E_n[\ph{A}{O}{\pn}]}{A(\pn)}-\frac{E_n[\ph{B}{O}{\pn}]}{B(\pn)}\right) + 2\frac{A(\pn) }{B(\pn)}= 2\frac{A(\pn) }{B(\pn)}\\
\Rightarrow \frac{A(\pn)}{B(\pn)} \frac{E_n[\ph{A}{O}{\pn}]}{A(\pn)}  + 2\frac{A(\pn) }{B(\pn)}= \frac{A(\pn) }{B(\pn)} \left(\frac{E_n[\ph{B}{O}{\pn}]}{B(\pn)} + 2 \right) .
\end{array}
\end{equation*}
The equation can be solved by equating the following ratios:
\begin{comment}
\begin{align*}
         %&\frac{\EE_n[\ph{A}{O}{\pn}]}{A(\pn)} = \frac{\EE_n[\ph{B}{O}{\pn}]}{B(\pn)}  \\
         %\Rightarrow 
         %\hat{\psi} := \frac{\hat A}{\hat B} = \frac{E_n[\ph{A}{O}{\pn}]}{E_n[\ph{B}{O}{\pn}]}
         &\frac{\EE_n[\ph{A}{O}{\pn}]}{B(\pn)} = \frac{\EE_n[\ph{B}{O}{\pn}] A(\pn)}{B(\pn)^2}  
\end{align*}  
\end{comment}
 \begin{align*}
          &\frac{\EE_n[\ph{A}{O}{\pn}] + 2A(\pn) }{B(\pn)} = \frac{A(\pn) }{B(\pn)}\left[\frac{\EE_n[\ph{B}{O}{\pn}] + 2B(\pn) }{B(\pn)} \right] \\
         \Rightarrow &
          \widehat{\left (\frac{ A}{ B}\right)} = \frac{\EE_n[\ph{A}{O}{\pn}] + 2A(\pn)}{\EE_n[\ph{B}{O}{\pn}] + 2 B(\pn) }.
\end{align*}
The latter yields an estimator for $\Psi$ when using the fits of nuisance models described in the previous section.

\subsection{Further details on fitting the nuisance functions}
In the simulation study, a multinomial logistic regression is used to fit the propensity score, and a linear model is used for the potential health outcome. The remaining nuisance models are fitted as described above.
%In order to estimate the counterfactual ranks of income, for a given income value $\tilde i$ we have fitted a logistic regression to the binary labels: $\mathbbm{1}(I\leq \tilde i)$.
%Finally, a probit regression is used for the regression explained for the last two models. 
%Code is available online at \url{https://github.com/stat4reg/causal_inequality}.

In the case study, the only differences are that for the potential health outcomes, a zero-inflated Poisson regression including all second order interaction terms is considered with $L_1$ regularization, using the R-package \texttt{mpath}, and similarly for the multinomial model for the propensity score, using the R-package \texttt{glmnet}. All analyses in this case study are done using R \citep{R}, and SparkR \citep{spark}. The latter is needed to fit the binary response regressions in the first algorithm described above. 
%for the last two nuisance models. The reason is that the model described needs to be fitted on all pairs of individuals in each education level related sub-group. 
For instance, for the second education level the size of the training set is 2'250'079'225, which makes it almost impossible to do it on typically available RAM when using regular packages in R.

\section{Further simulation results}\label{sim.app}

The following tables present the simulation results on the estimation of $G(0)$ for samples of size 1000, and 2000, as a complement of the results presented in the manuscript.

\begin{table}[H]\caption{\label{tab:5}Sample size $n=1000$: Bias, standard errors (both estimated, est sd, and Monte Carlo, MC sd), and empirical coverages for 95\% confidence intervals for estimation of $G(0)$ over 1000 replicates. Three cases: Correct parametric models for all nuisance functions, only $\pi$ misspecified, only model for $Y$ misspecified, and all nuisance functions misspecified.}
\centering 
\begin{tabular}{l n{2}{3} n{2}{3} n{2}{3} n{2}{3}}\hline \hline
 \multicolumn{1}{c}{Estimator} & \multicolumn{1}{c}{bias}  & \multicolumn{1}{c}{MC sd} & \multicolumn{1}{c}{est sd} & \multicolumn{1}{c}{coverage} \\
 \hline
\multicolumn{5}{c}{Correct models} \\
 Plug-in   & 0.000942 & 0.01379  & 0.014627 & .956 \\
 EIF\_EQ    & 0.001636 & 0.013936 & 0.014627 & .959 \\
 EIF\_1S    & 0.002456 & 0.014564 & 0.014627 & .949 \\
 \hline
\multicolumn{5}{c}{Incorrect model for $\pi$} \\
 Plug-in   & 0.000359 & 0.01327  & 0.014283 & .959 \\
 EIF\_EQ    & 0.000777 & 0.01349  & 0.014283 & .953 \\
 EIF\_1S    & 0.00132  & 0.014154 & 0.014283 & .937 \\
 \hline
\multicolumn{5}{c}{Incorrect model for $Y$} \\
 Plug-in   & 0.070483 & 0.008236 & 0.018989 & .001 \\
 EIF\_EQ    & 0.033421 & 0.011555 & 0.018989 & .591 \\
 EIF\_1S    & 0.006323 & 0.017731 & 0.018989 & .955 \\
 \hline
\multicolumn{5}{c}{Incorrect model for all} \\
 Plug-in   & 0.079468 & 0.011377 & 0.019481 & 0    \\
 EIF\_EQ    & 0.031449 & 0.01261  & 0.019481 & .690 \\
 EIF\_1S    & 0.016885 & 0.018539 & 0.019481 & .887\\
 \hline
\end{tabular}
\end{table}

\begin{table}[H]\caption{\label{tab:6}Sample size $n=2000$: Bias, standard errors (both estimated, est sd, and Monte Carlo, MC sd), and empirical coverages for 95\% confidence intervals for estimation of $G(0)$ over 1000 replicates. Three cases: Correct parametric models for all nuisance functions, only $\pi$ misspecified, only model for $Y$ misspecified, and all nuisance functions misspecified.}
\centering 
\begin{tabular}{l n{2}{3} n{2}{3} n{2}{3} n{2}{3}}\hline \hline
 \multicolumn{1}{c}{Estimator} & \multicolumn{1}{c}{bias}  & \multicolumn{1}{c}{MC sd} & \multicolumn{1}{c}{est sd} & \multicolumn{1}{c}{coverage} \\
 \hline
\multicolumn{5}{c}{Correct models} \\
Plug-in & 0.0000392 & 0.009529 & 0.010382 & .964  \\
EIF\_EQ  & 0.000525 & 0.009603 & 0.010382 & .969  \\
EIF\_1S  & 0.001072 & 0.009986 & 0.010382 & .960  \\
 \hline
\multicolumn{5}{c}{Incorrect model for $\pi$} \\
Plug-in & 0.000677 & 0.00981  & 0.010125 & .950  \\
EIF\_EQ  & 0.000473 & 0.01004  & 0.010125 & .948  \\
EIF\_1S  & 0.000208 & 0.010564 & 0.010125 & .939  \\
 \hline
\multicolumn{5}{c}{Incorrect model for $Y$} \\
Plug-in & 0.071084 & 0.005831 & 0.013413 & 0    \\
EIF\_EQ  & 0.033713 & 0.008176 & 0.013413 & .205  \\
EIF\_1S  & 0.006319 & 0.012443 & 0.013413 & .942  \\
 \hline
\multicolumn{5}{c}{Incorrect model for all} \\
Plug-in & 0.079651 & 0.008106 & 0.013741 & 0    \\
EIF\_EQ  & 0.031283 & 0.008647 & 0.013741 & .325  \\
EIF\_1S  & 0.017364 & 0.012532 & 0.013741 & .785 \\
 \hline
\end{tabular}
\end{table}

\section{Further empirical results}\label{hospital.sec}
The following tables present the results on the estimation of the counterfactual concentration index and the resulting contrasts when: the health outcome is the number of hospitalization days (in-patient, all diagnoses, Tables \ref{tab:Hosp:CI} and \ref{tab:Hosp:contrasts}) during 2000-2010, and the number of hosptialization days (in-patient, cardio-vascular diagnoses only, Tables \ref{tab:CVD:CI} and \ref{tab:CVD:contrasts}) during the same period. 

\begin{table}[H]
\caption{ \label{tab:Hosp:CI}Cohort 1950, health variable is number of hospitalization days: Naive, plug-in and the two proposed EIF-based estimators of the counterfactual concentration index, together with the standard deviation obtained using the influence function (valid for the EIF-estimators and conservative for Plug-in).}\label{tab:concentration2}
 \centering
 \begin{tabular}{ll n{2}{3} n{2}{3} n{2}{3}  n{2}{3}  n{2}{3}}
\hline\hline
\multicolumn{1}{c}{E} & \multicolumn{1}{c}{n}    & \multicolumn{1}{c}{Naive}  &   \multicolumn{1}{c}{sd}  & \multicolumn{1}{c}{Plug-in}  & \multicolumn{1}{c}{EIF\_1S}    &   \multicolumn{1}{c}{sd}        \\
\hline
0 & 25343 & -0.3768   & 0.010350233  &-0.000741062 & -0.3619     & 0.019232  \\
1 & 47435 & -0.3718   & 0.011988473  &0.011461047  & -0.37342    & 0.020105  \\
2 & 30984 & -0.34043  & 0.014969135  &0.002866941  & -0.34625    & 0.022702 
\\
\hline
\end{tabular}  
\end{table}

\begin{table}[H]\caption{ \label{tab:Hosp:contrasts}Cohort 1950, health variable is number of hospitalization days: Plug-in and two proposed EIF-based estimators of the causal effects of education, together with the standard deviation obtained using the influence function (valid for the EIF-estimators and conservative for Plug-in).}
    \centering 
   \begin{tabular}{l  n{2}{3} n{2}{3} n{2}{3} n{2}{3}  n{2}{3}}
\hline\hline
    & \multicolumn{1}{c}{Naive} & \multicolumn{1}{c}{sd} & \multicolumn{1}{c}{Plug-in} & \multicolumn{1}{c}{EIF\_1S}  & \multicolumn{1}{c}{sd}  \\
\hline
$\theta(1)$   &0.005009 & 0.015838271 & 0.012202 & -0.01152  & 0.027871 \\
$\theta(2)$   &0.036378 & 0.018198965 & 0.003608 & 0.01565   & 0.029794 
\\
\hline

\end{tabular}
\end{table}

\begin{table}[H]
\caption{ \label{tab:CVD:CI}Cohort 1950, health variable is number of CVD hospitalization days: Naive, plug-in and the two proposed EIF-based estimators of the counterfactual concentration index, together with the standard deviation obtained using the influence function (valid for the EIF-estimators and conservative for Plug-in).}\label{tab:concentration3}
 \centering
 \begin{tabular}{ll n{2}{3} n{2}{3} n{2}{3}  n{2}{3} n{2}{3}}
\hline\hline
\multicolumn{1}{c}{E} & \multicolumn{1}{c}{n}    & \multicolumn{1}{c}{Naive} & \multicolumn{1}{c}{sd}    & \multicolumn{1}{c}{Plug-in}  & \multicolumn{1}{c}{EIF\_1S}     &   \multicolumn{1}{c}{sd}      \\
\hline
0 & 25343 & -0.082577966798854 &0.015193628 & 0.0459678338170437 & -0.0632944161447413  & 0.017059567
 \\
1 & 47435 & -0.0649622124252104 &0.012176786 & 0.058091934824941  & -0.0673226812263473  & 0.012306266
 \\
2 & 30984 & -0.0366915094189988 &0.018260751 & 0.0677379361551107 & -0.0494280522580732  & 0.018742987
\\
\hline
\end{tabular}  
\end{table}

\begin{table}[H]\caption{ \label{tab:CVD:contrasts}Cohort 1950, health variable is number of CVD hospitalization days: Plug-in and two proposed EIF-based estimators of the causal effects of education, together with the standard deviation obtained using the influence function (valid for the EIF-estimators and conservative for Plug-in).}
    \centering 
   \begin{tabular}{l n{2}{3} n{2}{3}  n{2}{3} n{2}{3} n{2}{3}}
\hline\hline
    & \multicolumn{1}{c}{Naive} & \multicolumn{1}{c}{sd} & \multicolumn{1}{c}{Plug-in} & \multicolumn{1}{c}{EIF\_1S}  & \multicolumn{1}{c}{sd} \\
\hline
$\theta(1)$   &0.0176157543736436 &0.019471016 &0.012124101
 	&-0.00402826508160603 	   &0.021040246
\\
$\theta(2)$   &0.0458864573798552 &0.023755028 &0.021770102338067 	&0.0138663638866681 	&0.02534782
\\
\hline

\end{tabular}
\end{table}

%\bibliographystyle{agsm}
%\bibliography{bibliography.bib}

%\end{document}

\end{document}